\documentclass[sn-mathphys-num,pdflatex]{sn-jnl}

\usepackage{fix-cm}
\usepackage{graphicx}
\usepackage{multirow}
\usepackage{amsmath,amssymb,amsfonts}
\usepackage{amsthm}
\usepackage{mathrsfs}
\usepackage[title]{appendix}
\usepackage{xcolor}
\usepackage{textcomp}
\usepackage{manyfoot}
\usepackage{booktabs}
\usepackage{listings}
\usepackage{tabularray}
\usepackage{overpic}
\usepackage{mathtools}
\usepackage{comment}
\usepackage[linesnumbered,ruled,vlined]{algorithm2e}
\usepackage{enumitem}
\usepackage{longtable}
\usepackage{array}
\usepackage{placeins}

\renewcommand{\le}{\leqslant}

\renewcommand{\ge}{\geqslant}

\newcommand{\cC}{\mathcal{C}}

\newcommand{\eqdef}{\triangleq}

\newcommand{\B}{\mathcal{B}}

\newcommand{\LPC}{\operatorname{LPC}_{\infty}}
\newcommand{\dH}{d_{\mathrm H}}
\newcommand{\Dmin}{D_{\min}}
\newcommand{\Tball}{\mathsf{T}}

\newcommand{\wt}{\operatorname{wt}}

\newcommand{\Tr}{\operatorname{Tr}}
\newcommand{\Pref}{\operatorname{Pref}}
\newcommand{\prefx}{\operatorname{pref}}

\theoremstyle{thmstyleone}
\newtheorem{theorem}{Theorem}
\newtheorem{proposition}[theorem]{Proposition}
\newtheorem{corollary}[theorem]{Corollary}
\newtheorem{lemma}[theorem]{Lemma}

\theoremstyle{thmstyletwo}
\newtheorem{example}{Example}
\newtheorem{remark}{Remark}

\theoremstyle{thmstylethree}
\newtheorem{definition}{Definition}
\hypersetup{bookmarksdepth=subsubsection}

\newif\ifshowproofs
\showproofstrue 

\begin{document}

\title[A \(q\)-ary Local Criterion for \(\LPC(1)\)]%
{A \(q\)-ary Local Criterion for the Radius-One Limited Permutation Channel
and Almost-Optimal Binary Block-Concatenation Codes}

\author[1]{\fnm{Noam} \sur{Ben Shimon}}
\email{noamben@campus.technion.ac.il}

\author[2]{\fnm{Aryeh~Lev} \sur{Zabokritskiy (Yohananov)}}
\email{yuhanalev@telhai.ac.il}

\affil[1]{%
\orgdiv{Department of Computer Science},
\orgname{Technion -- Israel Institute of Technology},
\orgaddress{\city{Haifa}, \country{Israel}}
}

\affil[2]{%
\orgdiv{Department of Computer Science},
\orgname{MIGAL -- Galilee Research Institute/Tel-Hai University of Kiryat Shmona and the Galilee},
\orgaddress{\city{Kiryat Shmona}, \country{Israel}}
}

\abstract{
The radius-one limited permutation channel \(\LPC(1)\) maps a transmitted word
to any word obtained by an arbitrary set of pairwise disjoint adjacent
transpositions.  This is the \(r=1\) case of the \(\ell_\infty\)-limited
permutation channel introduced by Langberg et al., and is also the zero-error
version of simultaneous adjacent-swap errors.  We study zero-error
block-concatenation codes for this channel.

Our first contribution is a \(q\)-ary two-stage local criterion for certifying
free block-concatenation codes.  The criterion replaces the global all-length
confusability problem by finitely many local checks between blocks: a
same-length truncated-ball test and a second-stage prefix test for unequal
lengths.  In the binary case, we use this criterion to construct explicit
block-concatenation codes of rates \(0.649872\), \(0.652018\), and
\(0.653618\).  The best construction improves the previous
string-concatenation rate \(0.642805\) and comes within \(0.013049\) of the
known upper bound \(2/3\).

Although the local criterion is only sufficient, we prove that it is
rate-complete: for every alphabet size \(q\), the supremum of \(q\)-ary block
rates certified by the criterion is exactly the \(q\)-ary zero-error capacity
\(C_0^{(q)}\).  Thus the criterion may reject some valid finite block sets, but
it imposes no asymptotic rate loss.  We also give an exact product-automaton
verifier which decides, for a fixed prefix-free binary block set, whether the
induced finite-length codes are correcting for all lengths.

Finally, we study the related detection problem, motivated by settings with a
feedback channel where the receiver may request retransmission after detecting
an error.  We prove a \(q\)-ary pairing upper bound and give a \(q\)-ary local
detecting criterion.  In the binary case, we construct a detecting
block-concatenation code of rate \(0.756707\), compared with the upper bound
\(\frac12\log_2 3\approx0.792481\).
}

\keywords{limited permutation channel, adjacent transpositions, zero-error codes, block-concatenation codes, finite automata, error detection}

\pacs[MSC Classification]{94B25, 94B50, 94B65, 68P30, 68R05}

\maketitle

\section{Introduction}
\label{sec:intro}

Permutation and reordering channels model communication scenarios in which the
symbols of a transmitted word may arrive in a different order from the one in
which they were sent.  Such effects arise naturally in packet networks,
permutation channels, timing channels, molecular communication, and storage
systems affected by synchronization or bit-shift errors; see, for example,
\cite{WalshWeber2008,KovacevicVukobratovic2013,AnantharamVerdu1996,KovacevicPopovski2014,KadloorAdveEckford2012,ShamaiZehavi1991,Krachkovsky1994}.
For background on permutation metrics, see also \cite{DezaHuang1998}.
Langberg et al.~\cite{LangbergSchwartzYaakobi2015,LangbergSchwartzYaakobi2017}
introduced the \(\ell_\infty\)-limited permutation channel, where the channel may
apply a permutation to the transmitted word, but no symbol may move by more than
\(r\) positions.  They studied this channel in the zero-error setting, gave
direct and recursive constructions, and proved upper bounds on achievable
rates.

This paper focuses on the radius-one case.  In \(\LPC(1)\), every admissible
error is a set of pairwise disjoint adjacent transpositions.  Thus, over a
\(q\)-ary alphabet, a transmitted word \(c=c_1c_2\cdots c_n\in\Sigma_q^n\) may be
received as any word obtained from \(c\) by swapping any collection of
non-overlapping adjacent pairs.  A code is zero-error correcting if the error
balls around distinct codewords are disjoint.

The binary case \(q=2\) is the most basic nontrivial case and remains open.
Langberg et al.~\cite{LangbergSchwartzYaakobi2017} proved a general \(q\)-ary
upper bound which, for \(q=2\) and \(r=1\), gives the binary capacity upper bound
\(2/3\).  Their recursive construction gave binary rate \(0.609\), and
Chee et al.~\cite{CheeKiahLingNguyenVuZhang2016} later improved this to a
string-concatenation construction of rate approximately \(0.642805\).  One of the
main contributions of this paper is to improve this explicit binary
construction: our best certified block-concatenation code has rate \(0.653618\),
leaving an absolute gap of only \(0.013049\) from the upper bound \(2/3\).
Related models based directly on adjacent transpositions have also been studied;
in particular, Kova\v{c}evi\'c et al.~\cite{KovacevicGoyalKiah2026} derive
asymptotic bounds for \(q\)-ary strings under simultaneous disjoint
transpositions.

Our goal is to push the binary zero-error rate closer to the upper bound \(2/3\)
while keeping the correctness proof transparent.  We work mainly with
block-concatenation codes.  Given a finite block set
\(P\subseteq\Sigma_q^*\), the length-\(n\) code \(C_n(P)\) consists of all words
in \(P^*\) of length \(n\).  This framework is natural and has been used in
previous constructions, but it creates a boundary problem: even if individual
blocks look safe, an adjacent transposition may cross a block boundary.
Therefore a global all-length proof cannot be replaced by ordinary
ball-disjointness of isolated blocks.

To handle this, we introduce a \(q\)-ary two-stage local criterion.  The criterion
checks same-length block pairs using truncated balls, and unequal-length pairs
using a finite second-stage prefix test.  When the criterion holds, it certifies
that \(C_n(P)\) is correcting for every length \(n\).  The proof is recursive in
the spirit of Chee et al., but the criterion is stated directly in terms of
local boundary behavior rather than through a particular recursive family.

A natural concern is that such a local sufficient criterion might be too
restrictive to approach the true zero-error capacity.  We show that this is not
the case.  Although the criterion is not necessary and may reject some valid
block sets, it is rate-complete: for every alphabet size \(q\), the supremum of
rates certified by the criterion is exactly the \(q\)-ary zero-error capacity
\(C_0^{(q)}\).  Thus the criterion itself is not an asymptotic bottleneck; any
remaining rate gap in our explicit finite constructions is a limitation of the
particular block families found, not of the certification method.

We also give an exact product-automaton verifier.  The verifier is presented for
the binary case and decides the all-length confusability problem for a fixed
prefix-free block set.  Thus the paper has two complementary certification
tools: the criterion gives short mathematical certificates and supports symbolic
reasoning, while the verifier gives exact validation for finite block sets.

Finally, as in classical coding theory, we also study error detection.  Detection
is operationally meaningful in settings with feedback: the receiver may only
need to recognize that an error occurred and then request retransmission from the
sender.  Although detection is weaker than correction, for \(\LPC(1)\) it leads
to a genuinely different directed problem.  For correction, the obstruction is
the symmetric condition \(\B(c)\cap\B(c')\ne\emptyset\); for detection, the
obstruction is that a nontrivial channel output is itself another codeword,
namely \(c'\in\B(c)\).  We prove a simple \(q\)-ary pairing upper bound, give a
\(q\)-ary local detecting criterion, and construct a binary detecting
block-concatenation code of rate \(0.756707\).

\subsection*{Our contributions}

The main contributions of this paper are as follows.  We distinguish throughout
between results that are stated for a general \(q\)-ary alphabet and results
where the new explicit constructions are binary.

\begin{itemize}[leftmargin=*]
    \item We formulate a \(q\)-ary two-stage local criterion for
    block-concatenation codes in \(\LPC(1)\).  The criterion is sufficient,
    finite, and applies simultaneously to all lengths.  This part is not
    restricted to the binary alphabet.

    \item We apply the criterion to the binary case and construct explicit
    binary block-concatenation codes of rates \(0.649872\), \(0.652018\), and
    \(0.653618\).  The best construction achieves about \(98.04\%\) of the known
    binary upper bound \(2/3\), and improves the previous binary
    string-concatenation rate \(0.642805\).

    \item We prove that the \(q\)-ary local criterion is rate-complete for every
    alphabet size: although it is only sufficient, the supremum of \(q\)-ary
    block rates certified by the criterion is exactly the \(q\)-ary zero-error
    capacity \(C_0^{(q)}\).  Hence the criterion itself is not the asymptotic
    bottleneck.

    \item We give an exact product-automaton verifier for binary block sets.
    For a fixed prefix-free binary block set \(P\), the verifier decides whether
    all finite-length codes \(C_n(P)\) are correcting.  The automaton idea is
    finite-state in nature and can be generalized to larger alphabets, but the
    verifier used in this paper is presented and implemented in the binary case.

    \item We study error detection for \(\LPC(1)\).  The pairing upper bound and
    the local detecting criterion are \(q\)-ary, while the new explicit detecting
    construction is binary.  In the binary case, we obtain a detecting
    construction of rate \(0.756707\), compared with the pairing upper bound
    \[
        \frac12\log_2 3\approx0.792481 .
    \]
\end{itemize}

The rest of the paper is organized as follows.  Section~\ref{sec:prelim}
introduces the channel model, block-concatenation notation, and rate
calculation.  Section~\ref{sec:criterion} proves the \(q\)-ary two-stage local
criterion.  Section~\ref{sec:construction} presents the binary correcting
constructions and their verification tables.  Section~\ref{sec:criterion-properties}
studies structural properties of the criterion, including rate-completeness.
Section~\ref{sec:verifier} describes the exact product-automaton verifier and
compares it with the local criterion.  Section~\ref{sec:detect} treats error
detection.  We conclude with open problems and discussion.

\section{Preliminaries}
\label{sec:prelim}

Let \(\Sigma_q=\{0,1,\dots,q-1\}\) be a finite alphabet, and let
\([n]=\{1,2,\dots,n\}\).  Let \(S_n\) denote the set of all permutations over
\([n]\).  We write a permutation \(\pi\in S_n\) in vector notation as
\[
    \pi=[\pi(1),\pi(2),\dots,\pi(n)],
\]
and denote the identity permutation by
\[
    Id=[1,2,\dots,n].
\]
For two permutations \(\pi,\sigma\in S_n\), define their \(\ell_\infty\)-distance by
\[
    d_\infty(\pi,\sigma)
    =
    \max_{i\in[n]}\lvert \pi(i)-\sigma(i)\rvert .
\]
The \(\ell_\infty\)-weight of \(\pi\) is
\[
    \wt(\pi)
    =
    d_\infty(\pi,Id)
    =
    \max_{i\in[n]}\lvert \pi(i)-i\rvert .
\]

For a word \(c=c_1c_2\cdots c_n\in\Sigma_q^n\) and a permutation
\(\pi\in S_n\), we write \(\pi c\) for the word \(z=z_1z_2\cdots z_n\) defined by
\[
    z_i=c_{\pi(i)},
    \qquad i\in[n].
\]

\begin{definition}[The limited permutation channel]
\label{def:lpc-channel}
The radius-\(r\) \(\ell_\infty\)-limited permutation channel, denoted
\(\LPC(r)\), maps a transmitted word \(c\in\Sigma_q^n\) to any received word
\(z\in\Sigma_q^n\) of the form
\[
    z=\pi c
\]
for some permutation \(\pi\in S_n\) satisfying \(\wt(\pi)\le r\).  Equivalently,
no transmitted symbol is displaced by more than \(r\) positions.
\end{definition}

\begin{definition}[Error ball]
\label{def:lpc-ball}
For \(c\in\Sigma_q^n\), the radius-\(r\) ball centered at \(c\) is
\[
    B_r(c)
    =
    \{z\in\Sigma_q^n:z=\pi c,\ \pi\in S_n,\ \wt(\pi)\le r\}.
\]
Thus, if \(c\) is transmitted over \(\LPC(r)\), the received word may be any
element of \(B_r(c)\).
\end{definition}

\begin{definition}[\(\LPC\)-distance]
\label{def:lpc-distance}
For \(c,c'\in\Sigma_q^n\), define
\[
    d_{\LPC}(c,c')
    =
    \min\{\wt(\pi):\pi\in S_n,\ c'=\pi c\}.
\]
If no such permutation exists, we set \(d_{\LPC}(c,c')=\infty\).
\end{definition}

In particular, \(d_{\LPC}(c,c')<\infty\) only if \(c\) and \(c'\) have the same
symbol composition.  Although this distance is useful for comparison with the
standard formulation of the limited permutation channel, most of the arguments
below are phrased directly in terms of error balls.

\begin{definition}[Confusability and correcting codes]
\label{def:lpc-code}
Two words \(c,c'\in\Sigma_q^n\) are \emph{confusable} in \(\LPC(r)\) if
\[
    B_r(c)\cap B_r(c')\ne\emptyset.
\]
A nonempty code \(\cC\subseteq\Sigma_q^n\) is an \((n,M;r)_q\)-\(\LPC\) code if
\(\lvert\cC\rvert=M\) and no two distinct codewords are confusable, equivalently
\[
    B_r(c)\cap B_r(c')=\emptyset
    \qquad
    \text{for all distinct }c,c'\in\cC.
\]
\end{definition}

\begin{definition}[Optimal code size]
\label{def:Aq}
For integers \(n,q\ge2\) and \(r\ge1\), let \(A_q(n;r)\) denote the largest
integer \(M\) for which an \((n,M;r)_q\)-\(\LPC\) code exists.
\end{definition}

This paper focuses on the radius-one case \(r=1\), and the new explicit
constructions are binary.  When \(r=1\), every admissible permutation is a
product of pairwise disjoint adjacent transpositions.  We write
\[
    \B(c)\eqdef B_1(c).
\]
Thus \(\B(c)\) is the set of all words obtained from \(c\) by swapping any set of
pairwise disjoint adjacent pairs.  In binary sections we abbreviate
\(\Sigma\eqdef\Sigma_2=\{0,1\}\).

\begin{example}[A radius-one error]
For \(c=0110\), the effective swaps may occur at the two transition boundaries
\((1,2)\) and \((3,4)\), and these swaps are disjoint.  Hence
\[
    \B(0110)=\{0110,\ 1010,\ 0101,\ 1001\}.
\]
\end{example}

\begin{definition}[Detection]
\label{def:detection}
A code \(\cC\subseteq\Sigma_q^n\) is \emph{detecting} for \(\LPC(1)\) if
\(\B(c)\cap\cC=\{c\}\) for every \(c\in\cC\).  Every correcting code is
detecting, but not conversely.
\end{definition}

Since the channel only permutes coordinates, in the binary case it preserves
Hamming weight.  Thus words of different Hamming weights are automatically not
confusable.  For a binary word \(c\), we write \(\wt(c)\) for its Hamming weight.
For \(k\ge0\), \(\prefx_k(c)\) denotes the length-\(k\) prefix of \(c\), with
\(\prefx_0(c)=\epsilon\).

The zero-error capacity of the radius-one channel over a \(q\)-ary alphabet is
\[
    C_0^{(q)}=\limsup_{n\to\infty}\frac{1}{n}\log_q A_q(n;1).
\]
and we abbreviate the binary capacity by \(C_0\eqdef C_0^{(2)}\). 
Langberg et al.~\cite{LangbergSchwartzYaakobi2017} proved that for every
\(q\ge2\) and every \(n\) divisible by \(3\),
\[
    A_q(n;1)\le
    \left(q+2\binom{q}{2}+2\binom{q}{3}\right)^{n/3}.
\]
Substituting \(q=2\) gives
\[
    q+2\binom{q}{2}+2\binom{q}{3}
    =2+2\binom{2}{2}+2\binom{2}{3}=4.
\]
Hence, for \(3\mid n\), \(A_2(n;1)\le4^{n/3}=2^{2n/3}\), and therefore
\(C_0\le2/3\).

\subsection{Block-concatenation codes and rates}
 
A natural way to construct codes for \(\LPC(1)\) is by concatenating words from
a finite set of carefully chosen blocks.  In such a construction, one first
selects a finite block set \(P\subseteq\Sigma_q^*\), and then forms codewords by
free concatenation of blocks from \(P\).  This approach is particularly useful
because the asymptotic rate is determined only by the length profile of \(P\),
while correctness can often be certified by checking finitely many local
interactions between blocks.  This is the framework used in the block and
string-concatenation constructions of Langberg et
al.~\cite{LangbergSchwartzYaakobi2017} and Chee et
al.~\cite{CheeKiahLingNguyenVuZhang2016}, and it is also the framework used for
the explicit constructions in this paper.  We note, however, that
block-concatenation is not the only possible construction paradigm; direct,
recursive, computer-aided, and more general constrained or finite-state
constructions are also natural in this setting.

\begin{definition}[Block-concatenation code]
\label{def:block-code}
Let \(P\subseteq\Sigma_q^*\) be a finite set of nonempty blocks.  We write
\[
    P^*
    =
    \{p_1p_2\cdots p_m:m\ge0,\ p_i\in P\}
\]
for the free concatenation language generated by \(P\), where the case \(m=0\)
gives the empty word \(\epsilon\).  For \(n\ge0\), define
\[
    C_n(P)
    =
    P^*\cap\Sigma_q^n.
\]
Equivalently,
\[
    C_n(P)
    =
    \{p_1p_2\cdots p_m\in\Sigma_q^n:m\ge0,\ p_i\in P,
        \sum_{i=1}^{m}\lvert p_i\rvert=n\}.
\]
For \(d\ge0\), let \(\Pref_d(P^*)\) be the set of length-\(d\) prefixes of words
in \(P^*\), with \(\Pref_0(P^*)=\{\epsilon\}\).
\end{definition}

\begin{definition}[Length profile and block rate]
\label{def:block-rate}
Let \(P\subseteq\Sigma_q^*\) be a finite nonempty block set.  Its length profile
is \((p_\ell)_{\ell\ge1}\), where
\[
    p_\ell=\lvert\{p\in P:\lvert p\rvert=\ell\}\rvert .
\]
The associated growth parameter \(\lambda\) is defined as the unique positive
solution of
\[
    \sum_{\ell\ge1}p_\ell\lambda^{-\ell}=1.
\]
The corresponding profile rate over the \(q\)-ary alphabet is
\[
    R(P)=\log_q\lambda .
\]
\end{definition}

The number \(R(P)\) depends only on the length profile of \(P\).  It is obtained
by counting formal concatenations of blocks, as in earlier block and
string-concatenation constructions for the limited permutation channel
\cite{LangbergSchwartzYaakobi2017,CheeKiahLingNguyenVuZhang2016}.  Indeed, if
\[
    A_P(z)=\sum_{\ell\ge1}p_\ell z^\ell
\]
is the one-block length generating function, then \(A_P(z)^m\) counts formal
concatenations of exactly \(m\) blocks.  Hence formal concatenations of any
number of blocks are counted by
\[
    \sum_{m\ge0}A_P(z)^m
    =
    \frac{1}{1-A_P(z)}
    =
    \frac{1}{1-\sum_{\ell\ge1}p_\ell z^\ell}.
\]
The smallest positive singularity \(\rho\) of this rational function satisfies
\(A_P(\rho)=1\).  Writing \(\lambda=\rho^{-1}\), this is equivalent to
\[
    \sum_{\ell\ge1}p_\ell\lambda^{-\ell}=1.
\]
By the standard coefficient asymptotics for rational generating functions with
nonnegative coefficients, the number of formal concatenations of total length
\(n\) grows as \(\lambda^{n+o(n)}\) along the admissible lengths; see, e.g.,
\cite[Ch.~IV]{FlajoletSedgewick2009}.

If \(P\) is prefix-free, then different formal concatenations give different
words.  Indeed, if
\[
    p_1p_2\cdots p_m=q_1q_2\cdots q_s,
    \qquad p_i,q_j\in P,
\]
then the first blocks \(p_1\) and \(q_1\) are prefixes of the same word, so one
is a prefix of the other.  Since \(P\) is prefix-free, \(p_1=q_1\).  Cancelling
and repeating gives \(m=s\) and \(p_i=q_i\) for all \(i\).  Therefore, in the
prefix-free case,
\[
    \sum_{n\ge0}\lvert C_n(P)\rvert z^n
    =
    \sum_{m\ge0}A_P(z)^m
    =
    \frac{1}{1-A_P(z)}.
\]
Consequently, along the admissible lengths,
\[
    \lvert C_n(P)\rvert=\lambda^{n+o(n)},
\]
and the profile rate \(R(P)\) is the actual asymptotic rate of the
finite-length codes \(C_n(P)\).  For block sets that are not prefix-free, the
same formula still gives the profile rate, but the actual number of distinct
words in \(C_n(P)\) may be smaller because different block decompositions may
produce the same word.

\begin{example}[A small block-concatenation code]
Let \(P=\{00,11\}\).  Then \(P^*\) consists of all binary words obtained by
repeating each source symbol twice, for example
\[
    \epsilon,\quad 00,\quad 11,\quad 0000,\quad 0011,\quad 1100,\quad 1111,\quad \dots .
\]
For every \(m\ge0\), the length-\(2m\) code \(C_{2m}(P)\) has exactly \(2^m\)
words, and \(C_n(P)=\emptyset\) for odd \(n\).  The length profile is \(p_2=2\),
so the rate equation is
\[
    2\lambda^{-2}=1.
\]
Hence \(\lambda=\sqrt2\), and
\[
    R(P)=\log_2\sqrt2=\frac12.
\]
\end{example}

More generally, over a \(q\)-ary alphabet, the repeated-symbol block set
\[
    P_{\mathrm{rep},q}=\{aa:a\in\Sigma_q\}
\]
has length profile \(p_2=q\).  Thus \(q\lambda^{-2}=1\), so
\(\lambda=\sqrt q\), and
\[
    R(P_{\mathrm{rep},q})=\log_q\sqrt q=\frac12.
\]
This gives the trivial rate-\(1/2\) benchmark.  We do not prove its correctness
here; it follows immediately from the local criterion developed below.

\section{The two-stage local criterion}
\label{sec:criterion}

A direct proof that the finite-length block-concatenation code \(C_n(P)\) is
correcting for every \(n\) would require checking pairs of distinct concatenations
of total length \(n\), with no a priori bound on the number of blocks.  The
purpose of the criterion below is to replace this global condition by finitely
many local checks on pairs of blocks.  The criterion is \(q\)-ary and applies to
any finite block set \(P\subseteq\Sigma_q^*\).  It is only sufficient, not
necessary, but when it holds it gives a short certificate that \(C_n(P)\) is
correcting for every length \(n\).  Moreover, as we prove in
Section~\ref{sec:criterion-properties}, this sufficiency does not impose an
asymptotic rate loss: the supremum of rates certified by the criterion is the
zero-error capacity itself.  Thus the criterion may reject some valid block
sets, but it is not an asymptotic bottleneck.  Its exact counterpart is the
product-automaton verifier of Section~\ref{sec:verifier}.

For readability, we sometimes insert a separator \(u\,|\,v\) when displaying a
concatenation, but the separator is not part of the word.  For equal-length
words \(u,v\in\Sigma_q^\ell\), define the minimum Hamming distance between their
error balls by
\[
    \Dmin(u,v)=\min\{\dH(u',v'):u'\in\B(u),\ v'\in\B(v)\}.
\]
When \(\Dmin(u,v)=1\), a pair \((u',v')\in\B(u)\times\B(v)\) with
\(\dH(u',v')=1\) is called a \emph{closest witness}.

\begin{definition}[Truncated ball]
\label{def:truncated-ball}
For \(w\in\Sigma_q^\ell\), define the \emph{truncated ball}
\[
    \Tball(w)=\prefx_{\ell-1}(\B(w))
    =
    \{\prefx_{\ell-1}(w'):w'\in\B(w)\}\subseteq\Sigma_q^{\ell-1}.
\]
\end{definition}

The truncated ball records all possible received prefixes before the right
boundary of the word.  This is the relevant object for block concatenation:
after a block is followed by a continuation, a radius-one transposition can
affect the interface only through the last coordinate of the block.

\begin{definition}[Same-length boundary test]
\label{def:samelen}
Let \(u,v\in\Sigma_q^\ell\).  The ordered pair \((u,v)\) passes the same-length
boundary test if:
\begin{enumerate}[label=\textup{(E\arabic*)}]
    \item \(\Dmin(u,v)\ne0\), equivalently \(\B(u)\cap\B(v)=\emptyset\);
    \item if \(\Dmin(u,v)=1\), then no closest witness differs in the last
    coordinate \(\ell\).
\end{enumerate}
If \(\Dmin(u,v)\ge2\), condition \textup{(E2)} is vacuous.
\end{definition}

\begin{lemma}[Truncation form]
\label{lem:trunc}
For distinct \(u,v\in\Sigma_q^\ell\), the pair \((u,v)\) passes the
same-length boundary test if and only if
\[
    \Tball(u)\cap\Tball(v)=\emptyset.
\]
Moreover, truncation is injective on \(\B(w)\), and hence
\[
    \lvert\Tball(w)\rvert=\lvert\B(w)\rvert .
\]
\end{lemma}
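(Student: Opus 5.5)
Our strategy is to first establish the injectivity claim, since it also yields the key structural fact behind the equivalence. Fix \(w\in\Sigma_q^\ell\) and \(w'\in\B(w)\). Any word in \(\B(w)\) is \(\pi w\) for a product \(\pi\) of disjoint adjacent transpositions. Such a permutation preserves the multiset of symbols, so \(w'\) has the same symbol composition as \(w\). Consequently the last coordinate of \(w'\) is determined by \(\prefx_{\ell-1}(w')\): it is the unique symbol whose count in \(w\) exceeds its count in the prefix. Thus truncation is injective on \(\B(w)\), and \(\lvert\Tball(w)\rvert=\lvert\B(w)\rvert\) follows immediately.

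For the equivalence, we prove both directions via contrapositives, phrased with Hamming distances.

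Suppose first that \(\Tball(u)\cap\Tball(v)\ne\emptyset\). Then there are \(u'\in\B(u)\) and \(v'\in\B(v)\) with equal length-\((\ell-1)\) prefixes, so \(\dH(u',v')\le1\).
\begin{itemize}
\item If \(\dH(u',v')=0\), then \(\B(u)\cap\B(v)\ne\emptyset\) and (E1) fails.
\item If \(\dH(u',v')=1\), then \(\Dmin(u,v)\le 1\). Either \(\Dmin(u,v)=0\), so (E1) fails, or \(\Dmin(u,v)=1\) and \((u',v')\) is a closest witness differing in coordinate \(\ell\), so (E2) fails.
\end{itemize}

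Conversely, suppose the test fails. If (E1) fails, pick a common word \(z\in\B(u)\cap\B(v)\); its prefix lies in both truncated balls. If (E1) holds but (E2) fails, there is a closest witness \((u',v')\) differing only in coordinate \(\ell\). Then \(\prefx_{\ell-1}(u')=\prefx_{\ell-1}(v')\lies in \(\Tball(u)\cap\Tball(v)\).

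The one subtle point is that the backward direction must handle the case \(\Dmin(u,v)=0\) separately, so that we never invoke the witness notion when \(\Dmin\ne1\). As a consistency check, note that the composition argument shows that when \(u'\) and \(v'\) agree on the first \(\ell-1\) coordinates and \(u,v\) have the same composition, they in fact agree everywhere. So for same-composition pairs a last-coordinate closest witness cannot actually occur. This is harmless, because the argument above does not rely on it either way. Distinctness of \(u,v\) is not needed beyond the statement's setting.
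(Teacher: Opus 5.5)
Your proposal is correct and is essentially the paper's proof. Both use injectivity of truncation via preservation of symbol composition, and both prove the equivalence by splitting on whether the witnessing words in \(\B(u)\) and \(\B(v)\) coincide or differ only in coordinate \(\ell\); that you phrase both directions as contrapositives is immaterial. Your side remark that a last-coordinate closest witness can arise only when \(u\) and \(v\) have different compositions is also correct, and the only blemish is a missing closing math delimiter in the (E2) case of your converse.
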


\ifshowproofs
\begin{proof}
We first prove the injectivity statement.  All words in \(\B(w)\) have the same
symbol composition as \(w\), since the channel only permutes coordinates.  Hence,
if two elements of \(\B(w)\) have the same prefix of length \(\ell-1\), then
their last symbols are forced to be equal as well.  Therefore the two words are
identical.  Thus truncation is injective on \(\B(w)\), and consequently
\(\lvert\Tball(w)\rvert=\lvert\B(w)\rvert\).

It remains to prove the equivalence between the same-length boundary test and
disjointness of truncated balls.  We prove both directions.

First assume that \((u,v)\) passes the same-length boundary test.  We show that
\(\Tball(u)\cap\Tball(v)=\emptyset\).  Suppose, to the contrary, that
\(z\in\Tball(u)\cap\Tball(v)\).  Then there exist symbols
\(\alpha,\beta\in\Sigma_q\) such that
\[
    z\alpha\in\B(u),
    \qquad
    z\beta\in\B(v).
\]
If \(\alpha=\beta\), then \(z\alpha\in\B(u)\cap\B(v)\), contradicting
\textup{(E1)}.  If \(\alpha\ne\beta\), then \(z\alpha\) and \(z\beta\) differ
exactly in the last coordinate \(\ell\).  Since \textup{(E1)} excludes
\(\Dmin(u,v)=0\), this implies \(\Dmin(u,v)=1\), and the pair
\((z\alpha,z\beta)\) is a closest witness whose unique disagreement is in the
last coordinate.  This contradicts \textup{(E2)}.  Therefore the truncated balls
must be disjoint.

Conversely, assume that \(\Tball(u)\cap\Tball(v)=\emptyset\).  We show that the
same-length boundary test holds.  If \textup{(E1)} failed, then there would be a
word \(z'\in\B(u)\cap\B(v)\).  Its prefix \(\prefx_{\ell-1}(z')\) would then
belong to both \(\Tball(u)\) and \(\Tball(v)\), contradicting the assumed
disjointness.  Thus \textup{(E1)} holds.

Now suppose that \(\Dmin(u,v)=1\).  If there were a closest witness
\(u'\in\B(u)\), \(v'\in\B(v)\) differing in the last coordinate \(\ell\), then
\(u'\) and \(v'\) would have the same prefix of length \(\ell-1\).  This common
prefix would lie in both \(\Tball(u)\) and \(\Tball(v)\), again contradicting
\(\Tball(u)\cap\Tball(v)=\emptyset\).  Hence no closest witness differs in the
last coordinate, so \textup{(E2)} holds.

Thus the same-length boundary test holds if and only if the truncated balls are
disjoint.
\end{proof}
\fi

\begin{definition}[Two-stage unequal-length test]
\label{def:twostage}
Fix a finite block set \(P\subseteq\Sigma_q^*\).  Let \(x,y\in P\) be blocks of
lengths
\[
    \ell_x=\lvert x\rvert,
    \qquad
    \ell_y=\lvert y\rvert,
    \qquad
    \ell_x<\ell_y.
\]
Put
\[
    d=\ell_y-\ell_x,
    \qquad
    y_0=\prefx_{\ell_x}(y).
\]
The ordered pair \((x,y)\) passes the two-stage unequal-length test if:
\begin{enumerate}[label=\textup{(U\arabic*)}]
    \item \(\Dmin(x,y_0)\ne0\);
    \item if \(\Dmin(x,y_0)=1\), then for every legal prefix
    \(\rho\in\Pref_d(P^*)\), the same-length pair \((x\rho,y)\) passes the
    same-length boundary test.
\end{enumerate}
If \(\Dmin(x,y_0)\ge2\), condition \textup{(U2)} is vacuous.
\end{definition}

The second stage is needed only in the borderline case \(\Dmin=1\).  If the
shorter block and the corresponding prefix of the longer block are already
separated by distance at least two, then a continuation can affect at most one
boundary coordinate and cannot repair both discrepancies.

\begin{lemma}[Cancellation through a common prefix]
\label{lem:cancel}
Let \(a\in\Sigma_q^\ell\), and let \(s,t\in\Sigma_q^n\).  If \(as\) and \(at\)
are confusable in \(\LPC(1)\), then \(s\) and \(t\) are confusable in
\(\LPC(1)\).
\end{lemma}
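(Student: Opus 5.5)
By induction on \(\ell\), it suffices to treat the case \(\ell=1\): if \(as\) and \(at\) are confusable, write \(a=a_1a'\); applying the one-symbol statement to \(a_1(a's)\) and \(a_1(a't)\) shows that \(a's\) and \(a't\) are confusable, and the induction hypothesis then gives \(s\) and \(t\). So assume \(a=\alpha\in\Sigma_q\). The hypothesis gives a word \(z\in\B(\alpha s)\cap\B(\alpha t)\) of length \(n+1\). Its first coordinate either equals \(\alpha\) (position 1 untouched) or equals \(s_1\) (respectively \(t_1\)) because the transposition \((1,2)\) was applied. I will split into cases according to whether the transposition \((1,2)\) is used on each side.

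\emph{Case 1: neither side uses \((1,2)\).} Then the error patterns restricted to positions \(2,\dots,n+1\) are sets of disjoint adjacent transpositions. Writing \(z=\alpha z'\), we get \(z'\in\B(s)\cap\B(t)\), and we are done.

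\emph{Case 2: both sides use \((1,2)\).} Then \(z_1=s_1=t_1\) and \(z_2=\alpha\). The remaining transpositions act on positions \(3,\dots,n+1\), so \(z_3\cdots z_{n+1}\) lies in \(\B(s_2\cdots s_n)\cap\B(t_2\cdots t_n)\). Prepending the common symbol \(s_1=t_1\) (no swap at position 1) shows that \(s_1z_3\cdots z_{n+1}\in\B(s)\cap\B(t)\).

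\emph{Case 3: exactly one side uses it, say \(\alpha s\) uses \((1,2)\) and \(\alpha t\) does not.} Then \(z_1=s_1=\alpha\), because the \(\alpha t\) side leaves position 1 fixed. Swapping two equal symbols \(\alpha,s_1\) has no effect, so we may delete that transposition from the \(\alpha s\) pattern without changing \(z\). This reduces to Case 1.

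The only delicate point is Case 3: one must notice that equality of the first output symbol forces \(s_1=\alpha\), which makes the swap trivial. Once that is observed, the rest is bookkeeping of disjoint transpositions after removing the first coordinate. The case where \(\alpha t\) uses the swap and \(\alpha s\) does not is symmetric.
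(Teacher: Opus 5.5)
Your proof is correct and follows essentially the same route as the paper: reduce to cancelling one common symbol, then split on whether each realization uses the first boundary \((1,2)\), with the mixed case resolved by observing that the swap would exchange two equal symbols. The only cosmetic difference is that the paper discards ineffective swaps up front and gets a contradiction in the mixed case, whereas you delete the trivial swap on the fly and fall back to Case~1.
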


\ifshowproofs
\begin{proof}
It suffices to prove cancellation of one common initial symbol and then iterate.
Write the common symbol as \(\alpha\), so the two words are \(\alpha s\) and
\(\alpha t\).  Let
\[
    z\in\B(\alpha s)\cap\B(\alpha t).
\]
Choose realizations of \(z\) from both words by pairwise disjoint adjacent
transpositions, omitting ineffective swaps of equal symbols.

We distinguish according to whether the first boundary is used.  If neither
realization uses it, then \(z\) begins with \(\alpha\), and after deleting this
first output symbol we obtain a word in \(\B(s)\cap\B(t)\).

If both realizations use the first boundary, write
\[
    s=\beta s_1,
    \qquad
    t=\gamma t_1.
\]
The first output symbol is \(\beta\) in the first realization and \(\gamma\) in
the second, so \(\beta=\gamma\).  Since the first boundary is used, the second
boundary is not used in either realization.  Thus the common symbol \(\alpha\)
appears as the second output symbol in both realizations.  Deleting this second
output symbol from \(z\) leaves a word obtainable from both
\[
    s=\beta s_1
    \qquad\text{and}\qquad
    t=\beta t_1
\]
by the corresponding shifted sets of disjoint adjacent transpositions.  Hence
\(\B(s)\cap\B(t)\ne\emptyset\).

Finally, exactly one realization cannot use the first boundary.  For example, if
the first boundary is used for \(\alpha s\) but not for \(\alpha t\), then the
first output symbol in the second realization is \(\alpha\), so the first symbol
of \(s\) must also be \(\alpha\).  The swap at the first boundary of
\(\alpha s\) would then be ineffective, contrary to our choice of realization.
The other case is symmetric.

Thus one common initial symbol can be cancelled.  Repeating this \(\ell\) times
cancels the whole common prefix \(a\), and \(s\) and \(t\) are confusable.
\end{proof}
\fi

\begin{theorem}[Two-stage local criterion]
\label{thm:criterion}
Let \(P\subseteq\Sigma_q^*\) be finite.  Assume:
\begin{enumerate}[label=\textup{(C\arabic*)}]
    \item every two distinct blocks \(u,v\in P\) with
    \(\lvert u\rvert=\lvert v\rvert\) pass the same-length boundary test;
    \item every ordered pair \((x,y)\in P^2\) with
    \(\lvert x\rvert<\lvert y\rvert\) passes the two-stage unequal-length test.
\end{enumerate}
Then, for every \(n\ge0\), the finite-length block-concatenation code \(C_n(P)\)
is correcting for \(\LPC(1)\).
\end{theorem}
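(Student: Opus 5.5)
We argue by strong induction on \(n\), showing that any two distinct words \(c,c'\in C_n(P)\) have disjoint error balls. The case \(n=0\) is trivial, since \(C_0(P)=\{\epsilon\}\). For \(n>0\), fix parsings \(c=p_1p_2\cdots p_m\) and \(c'=q_1q_2\cdots q_s\) with \(p_i,q_j\in P\). We do not need \(P\) to be prefix-free, since we only use some chosen parsing.

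If \(p_1=q_1\), write \(c=p_1 s\) and \(c'=p_1 t\). Then \(s,t\in C_{n-\lvert p_1\rvert}(P)\), and \(s\ne t\) because \(c\ne c'\). Suppose \(c,c'\) were confusable. Lemma~\ref{lem:cancel} would make \(s,t\) confusable, which contradicts the induction hypothesis. So we may assume \(p_1\ne q_1\), and we suppose for contradiction that some \(z\in\B(c)\cap\B(c')\) exists.

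The key tool is a locality observation. Let \(w\) be any word, \(k\le\lvert w\rvert\), and \(z\in\B(w)\), realized by a set \(S\) of disjoint adjacent swaps. Let \(S_k\subseteq S\) be the swaps with both positions in \([k]\), and let \(w'\) be \(\prefx_k(w)\) with the swaps \(S_k\) applied. Then \(w'\in\B(\prefx_k(w))\). Moreover \(\prefx_{k-1}(z)=\prefx_{k-1}(w')\), because the only swap in \(S\setminus S_k\) touching \([k]\) is the one across the boundary \((k,k+1)\), and it changes only coordinate \(k\). Hence
\[
    \prefx_{k-1}(z)\in\Tball(\prefx_k(w)),
\]
and also \(\prefx_k(z)\) is within Hamming distance \(1\) of an element of \(\B(\prefx_k(w))\).

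\emph{Equal first lengths.} Suppose \(\lvert p_1\rvert=\lvert q_1\rvert=\ell\). Applying the observation with \(k=\ell\) to both \(c\) and \(c'\) gives \(\prefx_{\ell-1}(z)\in\Tball(p_1)\cap\Tball(q_1)\). By (C1) and Lemma~\ref{lem:trunc}, this intersection is empty, a contradiction.

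\emph{Unequal first lengths.} By symmetry, let \(x=p_1\) and \(y=q_1\) with \(\ell_x<\ell_y\). Set \(y_0=\prefx_{\ell_x}(y)\) and \(d=\ell_y-\ell_x\).

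Apply the observation with \(k=\ell_x\) to both words. This gives \(x'\in\B(x)\) and \(y_0'\in\B(y_0)\) that agree with \(z\) on the first \(\ell_x-1\) coordinates. Hence \(\dH(x',y_0')\le1\), so \(\Dmin(x,y_0)\le1\). By (U1) we have \(\Dmin(x,y_0)=1\), and (U2) therefore applies.

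Since \(n\ge\ell_y\), the word \(c\) begins with \(x\rho\), where \(\rho=\prefx_d(p_2\cdots p_m)\). This \(\rho\) lies in \(\Pref_d(P^*)\) because \(p_2\cdots p_m\in P^*\) has length \(n-\ell_x\ge d\). Applying the observation with \(k=\ell_y\) to \(c\) and \(c'\) gives
\[
    \prefx_{\ell_y-1}(z)\in\Tball(x\rho)\cap\Tball(y).
\]
By (U2), the pair \((x\rho,y)\) passes the same-length boundary test. Condition (E1) forces \(x\rho\ne y\), since \(\B(x\rho)\cap\B(y)=\emptyset\). Lemma~\ref{lem:trunc} then says these truncated balls are disjoint, a contradiction.

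The step needing the most care is the locality observation. One must check that coordinates \(1,\dots,k-1\) of the output are unaffected by anything beyond position \(k\), including the possible swap across the boundary \((k,k+1)\). One must also confirm that \(\rho\) is a legal prefix of a concatenation, which requires the length bound \(n\ge\ell_y\). Everything else reduces directly to Lemma~\ref{lem:trunc}, Lemma~\ref{lem:cancel}, and the induction.
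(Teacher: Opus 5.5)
Your proof is correct and follows the paper's argument essentially step for step: strong induction on \(n\), cancellation via Lemma~\ref{lem:cancel} when the first blocks coincide, and truncated-ball windows of lengths \(\ell\), \(\ell_x\), and \(\ell_y\) when they differ. Your explicit locality observation, your Hamming-distance derivation of \(\Dmin(x,y_0)\le1\), and your remark that (E1) forces \(x\rho\ne y\) (so that Lemma~\ref{lem:trunc} applies) make precise points the paper leaves implicit.
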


\ifshowproofs
\begin{proof}
We prove the claim by strong induction on the code length \(n\).  For \(n=0\),
the code \(C_0(P)\) contains only the empty word.  Assume now that, for every
\(n'<n\), the code \(C_{n'}(P)\) is correcting.  Let \(c,c'\in C_n(P)\) be
distinct.  Choose decompositions
\[
    c=as,
    \qquad
    c'=bt,
\]
where \(a,b\in P\) are the first blocks and \(s,t\in P^*\) are the remaining
suffixes.

If \(a=b\), then \(s\ne t\) and
\[
    \lvert s\rvert=\lvert t\rvert=n-\lvert a\rvert<n.
\]
If \(c\) and \(c'\) were confusable, Lemma~\ref{lem:cancel} would imply that
\(s\) and \(t\) are confusable, contradicting the induction hypothesis.  Hence
\(c\) and \(c'\) are not confusable in this case.

It remains to consider \(a\ne b\).  First suppose that
\[
    \lvert a\rvert=\lvert b\rvert=\ell .
\]
If \(c\) and \(c'\) were confusable, then their common received word would have
a prefix of length \(\ell-1\) belonging to both \(\Tball(a)\) and \(\Tball(b)\).
Indeed, a transposition crossing the right boundary of the first block can
affect coordinate \(\ell\), but it cannot affect the first \(\ell-1\)
coordinates.  Thus
\[
    \Tball(a)\cap\Tball(b)\ne\emptyset,
\]
contradicting \textup{(C1)} and Lemma~\ref{lem:trunc}.

Finally suppose, without loss of generality, that
\[
    \ell_a=\lvert a\rvert
    <
    \ell_b=\lvert b\rvert .
\]
Put
\[
    b_0=\prefx_{\ell_a}(b).
\]
If \(c\) and \(c'\) were confusable, then their common received word would have
a prefix of length \(\ell_a-1\) belonging to both \(\Tball(a)\) and
\(\Tball(b_0)\).  Hence, by Lemma~\ref{lem:trunc},
\[
    \Dmin(a,b_0)\le1.
\]
Condition \textup{(U1)} excludes \(\Dmin(a,b_0)=0\), so the only possible case is
\[
    \Dmin(a,b_0)=1.
\]

Let
\[
    d=\ell_b-\ell_a .
\]
Since \(\lvert c\rvert=\lvert c'\rvert=n\), the suffix \(s\) has length at least
\(d\).  Define
\[
    \rho=\prefx_d(s).
\]
Then \(\rho\in\Pref_d(P^*)\), and the two initial windows of length \(\ell_b\)
are
\[
    a\rho
    \qquad\text{and}\qquad
    b.
\]
If \(c\) and \(c'\) were confusable, the common received word would have a prefix
of length \(\ell_b-1\) belonging to both \(\Tball(a\rho)\) and \(\Tball(b)\).
This contradicts \textup{(U2)} and Lemma~\ref{lem:trunc}.  Thus no two distinct
codewords in \(C_n(P)\) are confusable.  By induction, this holds for every
\(n\ge0\).
\end{proof}
\fi

\begin{example}[A valid block set]
\label{ex:valid}
The block set \(P=\{00,11\}\) passes the criterion.  The only same-length pair is
\((00,11)\), and \(\Dmin(00,11)=2\), so the same-length test holds.  There are no
unequal-length pairs.  Hence \(C_n(P)\) is correcting for every \(n\).  This is
the trivial rate-\(1/2\) block construction.
\end{example}

\begin{example}[A failing block set]
\label{ex:failing}
The set \(P_0=\{00,011,111\}\) fails the criterion and indeed does not define a
correcting code family.  In length \(8\), the two codewords
\[
    c=00\,|\,111\,|\,011=00111011,
    \qquad
    c'=011\,|\,00\,|\,111=01100111
\]
belong to \(C_8(P_0)\), and both can reach \(01010111\).  The criterion catches
the failure as follows: appending the legal prefix \(1\) to the shorter block
\(00\) gives \(001\) versus \(011\), for which a closest witness differs in the
last coordinate.
\end{example}

\begin{lemma}[Prefix-freeness follows]
\label{lem:prefixfree}
If \textup{(C2)} holds, then \(P\) is prefix-free.
\end{lemma}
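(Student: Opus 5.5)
We argue by contraposition. Assume \(P\) is not prefix-free, and show that some ordered pair violates the two-stage unequal-length test, so \textup{(C2)} fails. If \(P\) is not prefix-free, there are distinct blocks \(x,y\in P\) with \(x\) a proper prefix of \(y\). Since the blocks are distinct and nonempty, \(\ell_x=\lvert x\rvert<\ell_y=\lvert y\rvert\). Hence the ordered pair \((x,y)\) falls under \textup{(C2)} and must pass the test of Definition~\ref{def:twostage}.

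The key computation is at the first stage. Put \(y_0=\prefx_{\ell_x}(y)\). Because \(x\) is a prefix of \(y\), we have \(y_0=x\). The identity permutation gives \(x\in\B(x)\), so
\[
    \Dmin(x,y_0)=\Dmin(x,x)=\dH(x,x)=0 .
\]
This violates condition \textup{(U1)} directly. So \((x,y)\) fails the two-stage test, contradicting \textup{(C2)}. Therefore no block of \(P\) is a proper prefix of another, and \(P\) is prefix-free.

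There is essentially no obstacle. The only points needing care are these two:
\begin{itemize}
    \item prefix-freeness concerns only distinct blocks, and distinct blocks in a prefix relation must have strictly different lengths, since equal length together with the prefix relation would force equality;
    \item \(\Dmin(w,w)=0\) for every word \(w\), which holds because \(\B(w)\) always contains \(w\).
\end{itemize}
Note that the second stage \textup{(U2)} is never needed: the failure already occurs at \textup{(U1)}.
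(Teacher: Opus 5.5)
Your proof is correct and is the paper's own argument: a proper prefix relation \(x\sqsubset y\) gives \(\prefx_{\lvert x\rvert}(y)=x\), so \(\Dmin(x,\prefx_{\lvert x\rvert}(y))=0\), which violates \textup{(U1)}. Your two side remarks, that distinct blocks in a prefix relation have strictly different lengths and that \(\Dmin(w,w)=0\) because \(w\in\B(w)\), only spell out steps the paper leaves implicit.
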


\ifshowproofs
\begin{proof}
If \(x\) is a proper prefix of \(y\), then
\[
    \prefx_{\lvert x\rvert}(y)=x,
\]
so
\[
    \Dmin(x,\prefx_{\lvert x\rvert}(y))=0.
\]
This contradicts condition \textup{(U1)} in the unequal-length test.
\end{proof}
\fi

\begin{remark}[Sufficiency only]
The criterion is a sufficient condition for the finite-length
block-concatenation codes \(C_n(P)\).  It is not a converse for arbitrary
zero-error codes, nor even for all valid block-concatenation code families.  For
a fixed block set \(P\), the exact confusability question is decided by the
verifier of Section~\ref{sec:verifier}.
\end{remark}

\section{Certified binary block constructions}
\label{sec:construction}

We now present the binary block constructions obtained in this work.  For a
finite block set \(P\), the corresponding length-\(n\) code is \(C_n(P)\).  By
Theorem~\ref{thm:criterion}, it is enough to verify the two-stage local
criterion for the finite set \(P\); once this is done, \(C_n(P)\) is correcting
for \(\LPC(1)\) for every \(n\ge0\).

The rate reported for a block set \(P\) is the block/profile rate \(R(P)\)
defined in Section~\ref{sec:prelim}.  Since every block set certified by the
criterion is prefix-free by Lemma~\ref{lem:prefixfree}, this profile rate is the
actual asymptotic rate of the finite-length code family \(\{C_n(P)\}_{n\ge0}\).

The binary upper bound recalled in Section~\ref{sec:prelim} is \(2/3\).
Langberg et al.~\cite{LangbergSchwartzYaakobi2017} constructed binary codes of
rate \(0.609\).  Chee et al.~\cite{CheeKiahLingNguyenVuZhang2016} later improved
this to a string-concatenation construction of rate approximately \(0.642805\).
The best construction below has certified rate \(0.653618\), leaving an
absolute gap of only \(0.013049\) from \(2/3\).  Equivalently, it achieves about
\(98.04\%\) of the known upper bound and closes about \(45\%\) of the gap between
the previous \(0.642805\) construction and \(2/3\).

We first recall the guarded recursive baseline from
\cite{CheeKiahLingNguyenVuZhang2016}.  Let
\[
    P_1=\{000,\ 111,\ 1000,\ 0111,\ 001111,\ 110000\}.
\]
Its length profile is \(p_3=2\), \(p_4=2\), and \(p_6=2\).  Hence the growth
parameter \(\lambda_1\) satisfies
\[
    2\lambda_1^{-3}+2\lambda_1^{-4}+2\lambda_1^{-6}=1,
\]
and
\[
    R(P_1)=\log_2\lambda_1\approx0.642805.
\]
This construction can be viewed as a guarded version of the perfect
length-three code \(\{000,010,101,111\}\), where the unstable short patterns
\(010\) and \(101\) are replaced by longer guarded blocks.  Although the
correctness of \(P_1\) was proved in \cite{CheeKiahLingNguyenVuZhang2016} by a
recursive argument, \(P_1\) also passes our two-stage local criterion, as shown
in Table~\ref{tab:verif} below.

Our first improvement, \(P_2\), follows the same design philosophy.  Just as
\(P_1\) is obtained by guarding the unstable words of the length-three perfect
code, \(P_2\) may be viewed as a further guarded refinement of the \(P_1\)
construction: dangerous adjacencies are replaced by longer bridge blocks while
preserving a finite block-concatenation structure.

The block set \(P_2\) consists of \(30\) blocks, grouped by length as follows.

\begingroup
\small
\setlength{\tabcolsep}{5pt}
\renewcommand{\arraystretch}{1.12}
\begin{center}
\begin{tabular}{@{}c p{0.78\textwidth}@{}}
\toprule
Length & Blocks \\
\midrule
6 &
\(\{000000,\ 000111,\ 001111,\ 110000,\ 111000,\ 111111\}\) \\[1mm]

7 &
\(\{0000111,\ 0001000,\ 0111000,\ 0111111,\ 1000000,\ 1000111,\ 1110111,\ 1111000\}\) \\[1mm]

8 &
\(\{01111000,\ 10000111\}\) \\[1mm]

9 &
\(\{000001111,\ 000110000,\ 001110000,\ 011101111,\ 011111000,\ 100000111,\)
\newline
\(\phantom{\{}100010000,\ 110001111,\ 111001111,\ 111110000\}\) \\[1mm]

11 &
\(\{00110000111,\ 01111001111,\ 10000110000,\ 11001111000\}\) \\
\bottomrule
\end{tabular}
\end{center}
\endgroup

Thus the length profile of \(P_2\) is
\[
    p_6=6,\qquad p_7=8,\qquad p_8=2,\qquad p_9=10,\qquad p_{11}=4.
\]
The growth parameter \(\lambda_2\) is determined by
\[
    6\lambda_2^{-6}
    +8\lambda_2^{-7}
    +2\lambda_2^{-8}
    +10\lambda_2^{-9}
    +4\lambda_2^{-11}
    =
    1,
\]
and
\[
    R(P_2)=\log_2\lambda_2\approx0.649872.
\]

The same search-and-certify method gives two larger block sets with still higher
rates.  The first, denoted \(P_3\), has block lengths \(9,\dots,15\) and profile
\[
    (p_9,p_{10},p_{11},p_{12},p_{13},p_{14},p_{15})
    =
    (18,26,18,40,22,14,18),
\]
with rate
\[
    R(P_3)\approx0.652018.
\]
The best block set found, denoted \(P^\star\), has block lengths \(10,\dots,19\)
and profile
\[
\begin{aligned}
    &(p_{10},p_{11},p_{12},p_{13},p_{14},p_{15},p_{16},p_{17},p_{18},p_{19})\\
    &\qquad=(14,6,78,58,56,92,60,72,58,80),
\end{aligned}
\]
with rate
\[
    R(P^\star)\approx0.653618.
\]
Since \(P^\star\) is the best certified construction and the explicit block
lists are long, we record only the profile and rate of \(P_3\).  The explicit
block list for \(P^\star\) is given in Appendix~\ref{app:blocks}.

Table~\ref{tab:profiles} summarizes the profiles and rates of the four block
sets.  The column \(|P|\) gives the total number of blocks, and the profile
column records the number \(p_\ell\) of blocks of each relevant length
\(\ell\).  The rate is computed from
\[
    \sum_{\ell}p_\ell\lambda^{-\ell}=1,
    \qquad
    R(P)=\log_2\lambda,
\]
and the last column gives the absolute gap between \(R(P)\) and the upper bound
\(2/3\).

\begin{table}[!t]
    \centering
    \small
    \begin{tabular}{@{}l c l c c@{}}
        \toprule
        Block set & \(|P|\) & profile \((p_\ell)\) & rate & gap to \(2/3\) \\
        \midrule
        \(P_1\) & 6
            & \(p_3=2,\ p_4=2,\ p_6=2\)
            & \(0.642805\)
            & \(0.023862\) \\
        \(P_2\) & 30
            & \(p_6=6,\ p_7=8,\ p_8=2,\ p_9=10,\ p_{11}=4\)
            & \(0.649872\)
            & \(0.016795\) \\
        \(P_3\) & 156
            & \((18,26,18,40,22,14,18)\), \(9\le\ell\le15\)
            & \(0.652018\)
            & \(0.014649\) \\
        \(P^\star\) & 574
            & \((14,6,78,58,56,92,60,72,58,80)\), \(10\le\ell\le19\)
            & \(0.653618\)
            & \(0.013049\) \\
        \bottomrule
    \end{tabular}
    \caption{Block profiles, certified rates, and gaps to the upper bound \(2/3\).}
    \label{tab:profiles}
\end{table}

For each block set in Table~\ref{tab:profiles}, validity follows by applying
the two-stage local criterion.  Table~\ref{tab:verif} records the corresponding
finite certificate.  The column ``same-length pairs'' counts unordered pairs of
distinct blocks of the same length that must pass the same-length boundary test.
For example, \(P_1\) has two blocks of each of the lengths \(3,4,6\), hence
\(1+1+1=3\) same-length pairs.  The column ``unequal pairs'' counts ordered
pairs \((x,y)\) with \(|x|<|y|\).  For \(P_1\), this gives
\(2\cdot2+2\cdot2+2\cdot2=12\) pairs.  The column ``first-stage safe'' counts
unequal pairs accepted immediately because
\[
    \Dmin\bigl(x,\prefx_{\lvert x\rvert}(y)\bigr)\ge2.
\]
The remaining pairs are sent to the second stage, where all legal prefixes
\(\rho\in\Pref_d(P^*)\) are checked.  The last column records the total number
of same-length boundary tests performed in this second stage.

\begin{table}[!t]
    \centering
    \small
    \begin{tabular}{@{}l r r r r r@{}}
        \toprule
        Block set
            & same-length pairs
            & unequal pairs
            & first-stage safe
            & sent to stage 2
            & stage-2 checks \\
        \midrule
        \(P_1\)       & 3     & 12     & 4      & 8    & 32 \\
        \(P_2\)       & 95    & 340    & 266    & 74   & 364 \\
        \(P_3\)       & 1886  & 10204  & 9588   & 616  & 3624 \\
        \(P^\star\)   & 19627 & 144824 & 141776 & 3048 & 37944 \\
        \bottomrule
    \end{tabular}
    \caption{Verification of the two-stage criterion for the block sets in Table~\ref{tab:profiles}.}
    \label{tab:verif}
\end{table}

Therefore, by Theorem~\ref{thm:criterion}, for every length \(n\), the codes
\[
    C_n(P_1),\qquad C_n(P_2),\qquad C_n(P_3),\qquad C_n(P^\star)
\]
are correcting for \(\LPC(1)\).  In particular, \(P^\star\) gives a certified
binary block-concatenation family of rate \(0.653618\), improving the previous
string-concatenation rate \(0.642805\) and coming within about \(2\%\) of the
upper bound \(2/3\).

\FloatBarrier

\section{Properties and limitations of the local criterion}
\label{sec:criterion-properties}

The constructions of Section~\ref{sec:construction} show that the two-stage
criterion can certify concrete binary block-concatenation codes of rate
\(0.653618\).  We now ask a more structural question: is this criterion itself
a possible obstacle to reaching the true zero-error capacity?

The answer is no.  Although the criterion is only sufficient and may reject some
valid block sets, we prove that it is \emph{rate-complete}: for every alphabet
size \(q\), the supremum of rates of block sets certified by the criterion is
exactly the zero-error capacity \(C_0^{(q)}\).  Thus any asymptotic rate loss in
our explicit finite constructions comes from the particular block families we
found, not from the certification method.

The proof has two steps.  First, we analyze the special case in which all blocks
have one fixed length.  In this case the second stage of the criterion is
irrelevant, and certification is exactly the problem of finding a large family
of blocks with pairwise disjoint truncated balls.  We show that this problem is
equivalent to ordinary zero-error coding one coordinate shorter.  Second, by
padding an optimal length-\(\ell\) code with one repeated final symbol and then
concatenating the padded words, we obtain certified block constructions whose
rates approach \(C_0^{(q)}\).

After proving this rate-completeness result, we record two finite verification
constraints that help interpret the computational tables in
Section~\ref{sec:construction}: a simple composition filter and a necessary
per-length packing condition.

\subsection{Single-length certification and rate-completeness}

We begin with block sets whose blocks all have the same length.  This is the
cleanest setting because unequal-length interactions do not occur, so condition
\textup{(C2)} of Theorem~\ref{thm:criterion} is vacuous.  The criterion then
reduces to the same-length boundary test, or equivalently to disjointness of
truncated balls.

Call a set \(P\subseteq\Sigma_q^\ell\) \emph{\(\Tball\)-disjoint} if the sets
\(\Tball(u)\), \(u\in P\), are pairwise disjoint.  By Lemma~\ref{lem:trunc}, this
is precisely the same-length certification condition for a block set all of
whose blocks have length \(\ell\).  Let \(M_{\Tball,q}(\ell)\) denote the maximum
size of a \(\Tball\)-disjoint subset of \(\Sigma_q^\ell\).

\begin{theorem}[Single-length certification equals shorter zero-error coding]
\label{thm:MT}
For every \(q\ge2\) and every \(\ell\ge1\),
\[
    M_{\Tball,q}(\ell)=A_q(\ell-1;1).
\]
\end{theorem}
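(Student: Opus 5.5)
The plan is to prove the two inequalities separately. Both rest on one structural description of the truncated ball. Write \(w=w'\alpha\) with \(w'\in\Sigma_q^{\ell-1}\) and \(\alpha\in\Sigma_q\), and take \(\ell\ge2\) for now. Every element of \(\B(w)\) is obtained from a set of pairwise disjoint adjacent transpositions, and that set either does or does not use the last boundary \((\ell-1,\ell)\). This gives the decomposition
\[
    \Tball(w)=\B(w')\ \cup\ \bigl\{z\alpha : z\in\B(\prefx_{\ell-2}(w))\bigr\}.
\]
If the last boundary is not used, the received word is \(z'\alpha\) with \(z'\in\B(w')\), and its length-\((\ell-1)\) prefix is \(z'\). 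If the last boundary is used, the boundary \((\ell-2,\ell-1)\) is unused. The received word is then \(z\,\alpha\,w_{\ell-1}\) with \(z\in\B(\prefx_{\ell-2}(w))\), whose prefix is \(z\alpha\). The two facts actually needed are:
\begin{enumerate}
    \item \(\B(w')\subseteq\Tball(w)\) always;
    \item if \(w_{\ell-1}=\alpha\), then the second set is contained in the first, so \(\Tball(w)=\B(w')\).
\end{enumerate}
Fact (2) holds because \(z\alpha\in\B(\prefx_{\ell-2}(w)\,\alpha)=\B(w')\); equivalently, the swap at the last boundary is ineffective.

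For \(M_{\Tball,q}(\ell)\ge A_q(\ell-1;1)\), take an optimal \((\ell-1,M;1)_q\) code \(\cC\). Pad each codeword \(c\in\cC\) by repeating its final symbol, giving \(u_c=c\,c_{\ell-1}\in\Sigma_q^\ell\). By fact (2), \(\Tball(u_c)=\B(c)\). These balls are pairwise disjoint because \(\cC\) is correcting. Hence \(\{u_c\}\) is a \(\Tball\)-disjoint set of size \(A_q(\ell-1;1)\); the padding map is clearly injective.

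For \(M_{\Tball,q}(\ell)\le A_q(\ell-1;1)\), take a \(\Tball\)-disjoint \(P\subseteq\Sigma_q^\ell\) and map each \(u\in P\) to \(u'=\prefx_{\ell-1}(u)\). By fact (1), \(\B(u')\subseteq\Tball(u)\), so the balls \(\B(u')\) are pairwise disjoint for distinct \(u\in P\). In particular the map is injective: two blocks with the same prefix \(u'\) would have truncated balls sharing \(u'\) itself. The image is therefore an \((\ell-1,\lvert P\rvert;1)_q\) code, which gives the bound.

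The degenerate case \(\ell=1\) is handled directly. Every truncated ball equals \(\{\epsilon\}\), so \(M_{\Tball,q}(1)=1=A_q(0;1)\).

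The only step needing care is the case split on the last boundary in the displayed decomposition. There one must check that using the last boundary forbids the boundary \((\ell-2,\ell-1)\), so the first \(\ell-2\) coordinates are an arbitrary element of \(\B(\prefx_{\ell-2}(w))\). Everything after that is immediate.
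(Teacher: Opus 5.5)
Your proof is correct and follows the paper's argument. You treat \(\ell=1\) directly, get the lower bound by padding an optimal length-\((\ell-1)\) code with a repeat of its last symbol so that \(\Tball(c\,c_{\ell-1})=\B(c)\), and get the upper bound from the prefix map using \(\B(\prefx_{\ell-1}(u))\subseteq\Tball(u)\); your decomposition of \(\Tball(w)\) by whether the last boundary is used simply spells out the two facts the paper states in one line each.
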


\ifshowproofs
\begin{proof}
For \(\ell=1\), every length-one word \(u\) has
\(\Tball(u)=\{\epsilon\}\).  Hence a \(\Tball\)-disjoint family has size at most
one, and
\[
    M_{\Tball,q}(1)=1=A_q(0;1).
\]
Assume now that \(\ell\ge2\).

For the lower bound, let \(C\subseteq\Sigma_q^{\ell-1}\) be an optimal correcting
code.  Append to each \(c=c_1\cdots c_{\ell-1}\in C\) a copy of its last symbol,
and put
\[
    P=\{c\,c_{\ell-1}:c\in C\}\subseteq\Sigma_q^\ell.
\]
The appended boundary is not a transition, so
\[
    \Tball(c\,c_{\ell-1})=\B(c).
\]
Since the balls \(\B(c)\), \(c\in C\), are pairwise disjoint, the family \(P\) is
\(\Tball\)-disjoint and has size \(A_q(\ell-1;1)\).

For the upper bound, let \(P\subseteq\Sigma_q^\ell\) be
\(\Tball\)-disjoint.  The map
\[
    u\longmapsto \bar u=\prefx_{\ell-1}(u)
\]
is injective, because \(\bar u\in\Tball(u)\), and equal prefixes would give an
intersection of truncated balls.  Moreover,
\[
    \B(\bar u)\subseteq\Tball(u).
\]
Hence the balls \(\B(\bar u)\), \(u\in P\), are pairwise disjoint, and
\[
    \{\bar u:u\in P\}\subseteq\Sigma_q^{\ell-1}
\]
is a correcting code of size \(\lvert P\rvert\).  Therefore
\[
    \lvert P\rvert\le A_q(\ell-1;1).
\]
\end{proof}
\fi

\begin{example}[Why \(010\) and \(101\) are unstable]
\label{ex:perfect3}
The perfect length-three code \(\{000,010,101,111\}\) has size \(A_2(3;1)=4\),
but it is not \(\Tball\)-disjoint:
\[
    \Tball(010)=\{01,10,00\},
    \qquad
    \Tball(101)=\{10,01,11\}.
\]
Thus
\[
    \Tball(010)\cap\Tball(101)\ne\emptyset,
\]
and also
\[
    00\in\Tball(010)\cap\Tball(000).
\]
By Theorem~\ref{thm:MT},
\[
    M_{\Tball,2}(3)=A_2(2;1)=3.
\]
Hence the two non-constant words \(010\) and \(101\) cannot both be used as
unguarded length-three blocks.  The guarded baseline \(P_1\) resolves exactly
this obstruction.
\end{example}

Theorem~\ref{thm:MT} turns any optimal zero-error code of length \(\ell\) into a
certified single-length block set of block length \(\ell+1\), by appending one
copy of the last symbol.  Concatenating \(k\) such certified blocks gives the
following buffered supermultiplicativity inequality.

\begin{corollary}[Buffered supermultiplicativity]
\label{cor:supermult}
For every \(q\ge2\) and all \(\ell,k\ge1\),
\[
    A_q(k(\ell+1);1)\ge A_q(\ell;1)^k.
\]
Consequently,
\[
    C_0^{(q)}
    =
    \sup_{\ell\ge1}
    \frac{1}{\ell+1}\log_q A_q(\ell;1).
\]
\end{corollary}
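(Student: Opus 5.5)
The plan is to derive the inequality directly from Theorem~\ref{thm:MT} and Theorem~\ref{thm:criterion}, and then obtain the capacity formula by a standard sandwich argument.

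\textbf{The inequality.} Let \(C\subseteq\Sigma_q^{\ell}\) be an optimal correcting code with \(\lvert C\rvert=A_q(\ell;1)\). Form the padded block set
\[
    P=\{c\,c_\ell:c\in C\}\subseteq\Sigma_q^{\ell+1}.
\]
By the lower-bound part of the proof of Theorem~\ref{thm:MT}, \(P\) is \(\Tball\)-disjoint. By Lemma~\ref{lem:trunc}, every pair of distinct blocks then passes the same-length boundary test, so (C1) holds. Condition (C2) is vacuous because all blocks have length \(\ell+1\). Theorem~\ref{thm:criterion} therefore shows that \(C_{k(\ell+1)}(P)\) is correcting. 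This set consists of all concatenations of \(k\) blocks. The padding map is injective, and all blocks have equal length, so different block sequences give different words. Hence \(\lvert C_{k(\ell+1)}(P)\rvert=A_q(\ell;1)^k\), which gives the inequality.

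\textbf{The formula.} Write \(s=\sup_{\ell}\frac{1}{\ell+1}\log_q A_q(\ell;1)\).

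For \(C_0^{(q)}\ge s\): fix \(\ell\). By monotonicity, \(A_q(n;1)\ge A_q(k(\ell+1);1)\) for \(n\ge k(\ell+1)\), since a code can be extended by appending a fixed constant tail. To justify this, note that if \(c\,\alpha^m\) and \(c'\,\alpha^m\) are confusable, then so are \(c\) and \(c'\). The reason is that the boundary between the \(c\)-part and the constant tail can only move the symbol \(\alpha\) into the last position. If one replaces that position by the original last symbol, one obtains confusability of \(c,c'\), using the same composition argument as in Lemma~\ref{lem:trunc}. I expect this to be the one fiddly point. An alternative that avoids it is to apply the Lemma~\ref{lem:cancel}-style cancellation in reverse. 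The simplest option is to avoid monotonicity entirely and take the \(\limsup\) along the subsequence \(n=k(\ell+1)\), which already bounds \(C_0^{(q)}\) from below, since a \(\limsup\) over all \(n\) is at least the \(\limsup\) over a subsequence. This gives
\[
    C_0^{(q)}\ge\lim_k\frac{k\log_q A_q(\ell;1)}{k(\ell+1)}=\frac{\log_q A_q(\ell;1)}{\ell+1}.
\]

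For \(C_0^{(q)}\le s\): trivially \(\frac{1}{\ell+1}\log_q A_q(\ell;1)=\frac{\ell}{\ell+1}\cdot\frac1\ell\log_q A_q(\ell;1)\). Take a sequence \(\ell_j\) along which \(\frac1{\ell_j}\log_q A_q(\ell_j;1)\to C_0^{(q)}\). The factor \(\frac{\ell_j}{\ell_j+1}\) tends to \(1\), so \(s\ge C_0^{(q)}\).

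Combining the two bounds gives equality.
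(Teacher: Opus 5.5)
Your proposal is correct and follows essentially the same route as the paper: pad an optimal length-\(\ell\) code with a repeated last symbol, certify the resulting single-length block set via Theorem~\ref{thm:criterion} (with (C2) vacuous), take the \(\limsup\) along the subsequence \(n=k(\ell+1)\), and obtain the reverse inequality from the factor \(\frac{\ell}{\ell+1}\to1\). The monotonicity detour is unnecessary; its sketched justification is sound, but you rightly fall back on the subsequence argument, exactly as the paper does. Your remark that equal-length blocks give unique decompositions makes explicit the count \(\lvert P\rvert^k=A_q(\ell;1)^k\), which the paper states without comment.
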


\ifshowproofs
\begin{proof}
Let \(C\subseteq\Sigma_q^\ell\) be an optimal correcting code of length
\(\ell\).  As in the proof of Theorem~\ref{thm:MT}, append to each
\(c=c_1\cdots c_\ell\in C\) a copy of its last symbol and obtain
\[
    P=\{c\,c_\ell:c\in C\}\subseteq\Sigma_q^{\ell+1}.
\]
This block set is \(\Tball\)-disjoint.  Since all blocks in \(P\) have the same
length, condition \textup{(C2)} is vacuous, and condition \textup{(C1)} follows
from Lemma~\ref{lem:trunc}.  Thus Theorem~\ref{thm:criterion} certifies the
finite-length codes \(C_n(P)\).  In particular, the words formed by
concatenating exactly \(k\) blocks from \(P\) form a correcting code of length
\(k(\ell+1)\) and size
\[
    \lvert P\rvert^k=A_q(\ell;1)^k.
\]
This proves the supermultiplicative inequality.

For the capacity identity, write
\[
    a_n=\log_q A_q(n;1).
\]
The inequality above gives
\[
    \frac{1}{k(\ell+1)}a_{k(\ell+1)}
    \ge
    \frac{1}{\ell+1}a_\ell
\]
for every \(k\).  Taking the limsup along the subsequence \(k(\ell+1)\) yields
\[
    C_0^{(q)}
    \ge
    \frac{1}{\ell+1}a_\ell
\]
for each \(\ell\), and hence
\[
    C_0^{(q)}
    \ge
    \sup_{\ell\ge1}\frac{1}{\ell+1}a_\ell .
\]
Conversely, for every \(n\ge1\),
\[
    \frac{1}{n}a_n
    =
    \frac{n+1}{n}\cdot\frac{1}{n+1}a_n
    \le
    \frac{n+1}{n}
    \sup_{\ell\ge1}\frac{1}{\ell+1}a_\ell .
\]
Letting \(n\to\infty\) gives the reverse inequality.
\end{proof}
\fi

Let \(\rho_{\mathrm{crit}}^{(q)}\) be the supremum of the block rates \(R(P)\),
where \(P\) ranges over all finite \(q\)-ary block sets certified by the
two-stage local criterion.

\begin{theorem}[Rate-completeness of the criterion]
\label{thm:complete}
For every \(q\ge2\), the two-stage local criterion is rate-complete:
\[
    \rho_{\mathrm{crit}}^{(q)}=C_0^{(q)}.
\]
In particular, in the binary case,
\[
    \rho_{\mathrm{crit}}^{(2)}=C_0.
\]
\end{theorem}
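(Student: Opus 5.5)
We prove the two inequalities \(\rho_{\mathrm{crit}}^{(q)}\le C_0^{(q)}\) and \(\rho_{\mathrm{crit}}^{(q)}\ge C_0^{(q)}\) separately.

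\emph{Upper bound.} Let \(P\) be any finite block set certified by the criterion. By Lemma~\ref{lem:prefixfree}, \(P\) is prefix-free. By the discussion following Definition~\ref{def:block-rate}, \(\lvert C_n(P)\rvert=\lambda^{n+o(n)}\) along the admissible lengths, and these form an infinite set. By Theorem~\ref{thm:criterion}, each \(C_n(P)\) is a correcting code, so \(A_q(n;1)\ge\lvert C_n(P)\rvert\). Taking \(\frac1n\log_q\) and the limsup along admissible \(n\) gives \(R(P)=\log_q\lambda\le C_0^{(q)}\). Taking the supremum over \(P\) yields \(\rho_{\mathrm{crit}}^{(q)}\le C_0^{(q)}\).

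The one point needing care is that the admissible lengths form an infinite set on which the asymptotic \(\lambda^{n+o(n)}\) holds. They do: the admissible lengths include all multiples of any block length, and the cited coefficient asymptotics apply along the arithmetic progression determined by the gcd of the block lengths. Alternatively, one can bypass the asymptotics by a direct argument. Fix a block length \(\ell_0\) and consider \(n=k\ell_0\). Prefix-freeness makes the count of words exactly the count of compositions, whose generating function has radius \(1/\lambda\) with nonnegative coefficients. That is enough for the limsup.

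\emph{Lower bound.} Fix \(\ell\ge1\) and an optimal correcting code \(C\subseteq\Sigma_q^\ell\) with \(\lvert C\rvert=A_q(\ell;1)\). As in the proof of Corollary~\ref{cor:supermult}, form \(P_\ell=\{c\,c_\ell:c\in C\}\subseteq\Sigma_q^{\ell+1}\). By Theorem~\ref{thm:MT}'s construction this set is \(\Tball\)-disjoint. By Lemma~\ref{lem:trunc}, condition (C1) therefore holds, and (C2) is vacuous since all blocks share one length. So \(P_\ell\) is certified.

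Its profile is \(p_{\ell+1}=A_q(\ell;1)\), so \(\lambda^{\ell+1}=A_q(\ell;1)\) and \(R(P_\ell)=\frac{1}{\ell+1}\log_q A_q(\ell;1)\). Hence \(\rho_{\mathrm{crit}}^{(q)}\ge\sup_{\ell}\frac{1}{\ell+1}\log_q A_q(\ell;1)\), which equals \(C_0^{(q)}\) by Corollary~\ref{cor:supermult}.

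Combining the two inequalities gives the theorem, and \(q=2\) gives the binary statement. I expect no real obstacle. The only delicate step is justifying, in the upper bound, that the profile rate is realized by actual code sizes along infinitely many lengths. This is handled by prefix-freeness together with the generating-function asymptotics already quoted in Section~\ref{sec:prelim}.
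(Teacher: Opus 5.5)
Your proof is correct and follows the paper's argument essentially verbatim. The upper bound combines Theorem~\ref{thm:criterion} with prefix-freeness (Lemma~\ref{lem:prefixfree}), so that $R(P)$ is the true growth rate of $\lvert C_n(P)\rvert\le A_q(n;1)$. The lower bound uses the $\Tball$-disjoint padded optimal code $\{c\,c_\ell\}$ together with Corollary~\ref{cor:supermult}. Your extra care about admissible lengths fills in a detail the paper leaves implicit, though Cauchy--Hadamard applied to $1/(1-A_P(z))$ already gives $\limsup_n \lvert C_n(P)\rvert^{1/n}=\lambda$ over all $n$, which is all that is needed since $C_0^{(q)}$ is itself a limsup over all $n$; restricting to multiples of $\ell_0$ is unnecessary.
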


\ifshowproofs
\begin{proof}
First, suppose that a finite block set \(P\subseteq\Sigma_q^*\) is certified by
the two-stage criterion.  Then Theorem~\ref{thm:criterion} implies that
\(C_n(P)\) is correcting for every \(n\).  Hence
\[
    \lvert C_n(P)\rvert\le A_q(n;1)
\]
for every admissible length \(n\).  Since certified block sets are prefix-free by
Lemma~\ref{lem:prefixfree}, the profile rate \(R(P)\) is the actual asymptotic
rate of the family \(C_n(P)\).  Therefore
\[
    R(P)\le C_0^{(q)}.
\]
Taking the supremum over all certified \(P\) gives
\[
    \rho_{\mathrm{crit}}^{(q)}\le C_0^{(q)}.
\]

For the reverse inequality, let \(C\subseteq\Sigma_q^\ell\) be an optimal
correcting code of length \(\ell\), and form the padded block set
\[
    P=\{c\,c_\ell:c=c_1\cdots c_\ell\in C\}
    \subseteq\Sigma_q^{\ell+1}.
\]
By Theorem~\ref{thm:MT}, this set is \(\Tball\)-disjoint.  Since all blocks have
the same length, the two-stage criterion certifies \(P\).  Its block rate is
\[
    R(P)=\frac{1}{\ell+1}\log_q A_q(\ell;1).
\]
Taking the supremum over \(\ell\) and applying Corollary~\ref{cor:supermult}
gives
\[
    \rho_{\mathrm{crit}}^{(q)}\ge C_0^{(q)}.
\]
\end{proof}
\fi

Thus the criterion itself is not the asymptotic bottleneck for any alphabet
size.  Improving certified rates is, in principle, exactly as hard as improving
the zero-error capacity \(C_0^{(q)}\).

\subsection{Finite verification filters and packing}

The rate-completeness theorem is asymptotic.  We now record two finite
constraints that help interpret the verification data in
Table~\ref{tab:verif}.  The first is an elementary filter used in the first
stage of the criterion.  For \(w\in\Sigma_q^\ell\), write
\[
    \operatorname{comp}(w)
    =
    (n_0(w),n_1(w),\dots,n_{q-1}(w)),
\]
where \(n_a(w)\) is the number of occurrences of \(a\) in \(w\).  Since
\(\LPC(1)\) only permutes coordinates, every word in \(\B(w)\) has the same
composition as \(w\).  Therefore, for equal-length words \(u,v\in\Sigma_q^\ell\),
\[
    \Dmin(u,v)
    \ge
    \frac12
    \left\lVert
        \operatorname{comp}(u)-\operatorname{comp}(v)
    \right\rVert_1 .
\]
In the binary case this becomes
\[
    \Dmin(u,v)\ge\lvert\wt(u)-\wt(v)\rvert.
\]
Thus any binary pair with weight difference at least two is automatically
first-stage safe.  This explains why most unequal-length pairs in
Table~\ref{tab:verif} are accepted before the second stage is invoked.

We next give a necessary packing condition for blocks of a fixed length.  The
ball-size formula below is a restatement, in the present transition-run notation,
of the antirun-profile formula of Langberg et
al.~\cite[Theorem~10]{LangbergSchwartzYaakobi2017}.  We include the short
argument only to make the indexing used below clear.

For a word \(w=w_1\cdots w_\ell\), let
\[
    \Tr(w)=\{i\in\{1,\dots,\ell-1\}:w_i\ne w_{i+1}\}
\]
be its set of transition boundaries.

\begin{lemma}[Ball size]
\label{lem:ballsize}
Let \(w\in\Sigma_q^\ell\), and let \(t_1,\dots,t_s\) be the lengths of the
maximal runs of consecutive transition boundaries in \(\Tr(w)\).  Then
\[
    \lvert\B(w)\rvert=\prod_{j=1}^{s}F_{t_j+2},
\]
where \(F_1=F_2=1\) and \(F_t=F_{t-1}+F_{t-2}\) for \(t\ge3\).
\end{lemma}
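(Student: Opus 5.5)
The plan is to count elements of \(\B(w)\) by counting sets of effective swaps, and then factor the count over the runs of \(\Tr(w)\).

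\textbf{Step 1: reduce to effective swap sets.} Every element of \(\B(w)\) arises from a set \(S\) of pairwise disjoint adjacent transpositions. A swap at a boundary \(i\notin\Tr(w)\) exchanges equal symbols and can be dropped. So \(\B(w)\) is the image of the map
\[
S\mapsto w^{S},\qquad S\subseteq\Tr(w),\ S \text{ containing no two consecutive boundaries }i,i+1 .
\]
Disjointness of transpositions at boundaries \(i\) and \(j\) means exactly \(\lvert i-j\rvert\ge2\).

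\textbf{Step 2: injectivity of \(S\mapsto w^S\).} This is the step needing care. Suppose \(S\ne S'\), and let \(i\) be the smallest boundary in the symmetric difference, say \(i\in S\setminus S'\). For \(j<i\), the two words agree on all positions up to \(i-1\), because the swaps below \(i\) coincide. Moreover \(i-1\) lies in neither set: if it lay in \(S\) it would conflict with \(i\in S\), so it is not in \(S\), and by minimality of \(i\) it is then not in \(S'\) either.

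Now compare position \(i\). In \(w^S\) it holds \(w_{i+1}\). In \(w^{S'}\) position \(i\) is untouched, since neither \(i-1\) nor \(i\) is in \(S'\), so it holds \(w_i\). Because \(i\in\Tr(w)\) we have \(w_i\ne w_{i+1}\), so \(w^S\ne w^{S'}\). Hence
\[
\lvert\B(w)\rvert=\#\{S\subseteq\Tr(w): S\text{ has no two consecutive elements}\}.
\]

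\textbf{Step 3: factor over runs.} Decompose \(\Tr(w)\) into its maximal runs of consecutive integers, of lengths \(t_1,\dots,t_s\). Two boundaries in different runs are never consecutive. So a valid \(S\) is exactly a choice of a non-consecutive subset inside each run independently, and the count factors as a product over the runs.

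\textbf{Step 4: count within one run.} The number of subsets of a path of \(t\) consecutive integers with no two adjacent elements is \(F_{t+2}\). This follows by induction on \(t\), splitting on whether the last element is used:
\[
a_t=a_{t-1}+a_{t-2},\qquad a_0=1=F_2,\quad a_1=2=F_3 .
\]
Multiplying over runs gives the formula \(\lvert\B(w)\rvert=\prod_{j=1}^{s}F_{t_j+2}\). An empty \(\Tr(w)\) gives the empty product \(1\), which is correct since then \(\B(w)=\{w\}\).

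The main obstacle is the injectivity argument in Step 2. It is where the \(q\)-ary alphabet matters, since for \(q>2\) one cannot argue via weight. The first-differing-boundary argument above handles this directly.
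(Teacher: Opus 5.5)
Your proof is correct and follows the paper's argument step for step: you restrict to non-adjacent subsets of $\Tr(w)$, prove injectivity via the smallest boundary in the symmetric difference, and factor over maximal runs into path independent-set counts $F_{t+2}$. At the injectivity step you are slightly more explicit than the paper, noting that boundary $i-1$ lies in neither set, so coordinate $i$ reads $w_{i+1}$ in one word and $w_i$ in the other. Your closing remark about weight is beside the point, though: every word in $\B(w)$ has the same composition for every $q$, so weight could never separate them even when $q=2$.
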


\ifshowproofs
\begin{proof}
Each \(z\in\B(w)\) is obtained by swapping a set \(S\) of pairwise non-adjacent
adjacent boundaries.  A swap at a non-transition boundary has no effect, so we
may take
\[
    S\subseteq\Tr(w)
\]
with no two adjacent boundaries.  Conversely, every such set \(S\) gives a word
in \(\B(w)\).

Distinct such sets \(S\) give distinct received words.  Indeed, let \(i\) be the
smallest boundary in the symmetric difference of two such sets, and suppose
\(i\in S\) for one set but not the other.  Since \(i\in\Tr(w)\), the symbols
\(w_i\) and \(w_{i+1}\) are different.  Moreover, by minimality of \(i\) and
non-adjacency of the chosen boundary sets, the two resulting words differ at
coordinate \(i\).  Thus the number of distinct received words equals the number
of subsets of \(\Tr(w)\) with no adjacent elements.

Inside a maximal run of \(t\) consecutive transition boundaries, this is the
number of independent sets in a path on \(t\) vertices, namely \(F_{t+2}\).
Different runs are separated by gaps, so the counts multiply.
\end{proof}
\fi

For a finite block set \(P\), write
\[
    P_\ell=\{u\in P:\lvert u\rvert=\ell\}.
\]

\begin{proposition}[Per-length packing]
\label{prop:packing}
If \(P\subseteq\Sigma_q^*\) satisfies the two-stage criterion, then for every
\(\ell\ge1\),
\[
    \sum_{u\in P_\ell}\lvert\B(u)\rvert\le q^{\,\ell-1}.
\]
Equivalently, if \(t_1(u),\dots,t_{s(u)}(u)\) are the transition-run lengths of
\(u\), then
\[
    \sum_{u\in P_\ell}\prod_{j=1}^{s(u)}F_{t_j(u)+2}\le q^{\,\ell-1}.
\]
\end{proposition}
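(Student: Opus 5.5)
The plan is a direct counting argument inside \(\Sigma_q^{\ell-1}\), using the truncated balls as packing objects.

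First, fix \(\ell\ge1\) and consider the blocks \(P_\ell\). If \(\lvert P_\ell\rvert\le1\) there is nothing to compare pairwise, but the bound still needs checking. Otherwise, condition \textup{(C1)} of Theorem~\ref{thm:criterion} says that every two distinct \(u,v\in P_\ell\) pass the same-length boundary test. By Lemma~\ref{lem:trunc}, this is equivalent to \(\Tball(u)\cap\Tball(v)=\emptyset\). Hence the sets \(\Tball(u)\), \(u\in P_\ell\), are pairwise disjoint subsets of \(\Sigma_q^{\ell-1}\).

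Second, I will use the injectivity part of Lemma~\ref{lem:trunc}, namely \(\lvert\Tball(u)\rvert=\lvert\B(u)\rvert\). Summing over \(P_\ell\) and using disjointness gives
\[
\sum_{u\in P_\ell}\lvert\B(u)\rvert=\Bigl\lvert\bigcup_{u\in P_\ell}\Tball(u)\Bigr\rvert\le q^{\ell-1}.
\]
The equivalent Fibonacci form then follows by substituting Lemma~\ref{lem:ballsize} for each \(\lvert\B(u)\rvert\).

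The only delicate point is the degenerate case, and it is trivial. When \(\lvert P_\ell\rvert\le1\), the sum is either empty or equals \(\lvert\Tball(u)\rvert\le q^{\ell-1}\). For \(\ell=1\) this is \(1\le1\), since \(\Tball(u)=\{\epsilon\}\). So the inequality holds in all cases, and I expect no real obstacle beyond invoking Lemma~\ref{lem:trunc} correctly. Condition \textup{(C2)} is not needed for this argument.
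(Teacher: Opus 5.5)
Your proof is correct and is essentially the paper's argument: \textup{(C1)} with Lemma~\ref{lem:trunc} makes the sets \(\Tball(u)\), \(u\in P_\ell\), pairwise disjoint in \(\Sigma_q^{\ell-1}\), the injectivity part gives \(\lvert\Tball(u)\rvert=\lvert\B(u)\rvert\), and Lemma~\ref{lem:ballsize} yields the Fibonacci form. Your separate handling of \(\lvert P_\ell\rvert\le1\) is harmless, since disjointness holds vacuously there, and you are right that \textup{(C2)} plays no role.
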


\ifshowproofs
\begin{proof}
If \(P\) satisfies the criterion, then every pair of distinct blocks in
\(P_\ell\) passes the same-length boundary test.  By Lemma~\ref{lem:trunc}, the
truncated balls \(\Tball(u)\), \(u\in P_\ell\), are pairwise disjoint subsets of
\(\Sigma_q^{\ell-1}\).  Moreover,
\[
    \lvert\Tball(u)\rvert=\lvert\B(u)\rvert.
\]
Therefore
\[
    \sum_{u\in P_\ell}\lvert\B(u)\rvert
    =
    \sum_{u\in P_\ell}\lvert\Tball(u)\rvert
    \le
    \lvert\Sigma_q^{\ell-1}\rvert
    =
    q^{\ell-1}.
\]
The Fibonacci product form follows from Lemma~\ref{lem:ballsize}.
\end{proof}
\fi

The packing condition is only necessary.  It does not capture the full
confusability behavior across different block lengths, which is precisely why
the second stage of the criterion is needed.

\section{The exact product-automaton verifier}
\label{sec:verifier}

The two-stage criterion of Section~\ref{sec:criterion} gives a short sufficient
certificate for the finite-length codes \(C_n(P)\), but it is not an exact test.
We now describe an exact finite-state verifier for the binary case.  Given a
finite prefix-free block set \(P\subseteq\Sigma^*\setminus\{\epsilon\}\), where
\(\Sigma=\{0,1\}\), the verifier decides whether, for every \(n\ge0\), the code
\(C_n(P)\) is correcting for \(\LPC(1)\).

Equivalently, the verifier decides whether there exist a length \(n\), two
distinct codewords \(c_1,c_2\in C_n(P)\), and a received word \(z\in\Sigma^n\)
such that
\[
    z\in\B(c_1)\cap\B(c_2).
\]
Unlike the local criterion, the verifier tests this confusability condition
directly.  A \texttt{SUCCESS} verdict therefore certifies all lengths
simultaneously.

The idea is as follows.  A received word \(z\) is a possible channel output of a
codeword \(c\in C_n(P)\) precisely when \(z\in\B(c)\) and \(c\in P^*\).  We build
a finite-state parser that scans \(z\) from left to right and
nondeterministically reconstructs a possible transmitted word \(c\) satisfying
both conditions.  The family \(\{C_n(P)\}_{n\ge0}\) fails to be correcting exactly
when the same received word \(z\) admits two different such reconstructions.
Thus the verifier runs two copies of the parser on the same input \(z\) and
searches for a reachable product state in which the two reconstructed codewords
are distinct.

\subsection{Reconstructing transmitted words}

The reconstruction uses two elementary components.  Domino tilings describe
radius-one error balls, and a trie automaton recognizes the language \(P^*\).

\begin{definition}[Domino tiling]
\label{def:domino-tiling}
A \emph{domino tiling} \(\tau\) of \([n]\) is a partition of \([n]\) into
singletons \(\{i\}\) and adjacent dominoes \(\{i,i+1\}\).  For a word
\(c=c_1\cdots c_n\in\Sigma^n\), define \(E_\tau(c)\) by fixing the symbols on
singleton parts and swapping the two symbols on every domino part.
\end{definition}

\begin{lemma}[Balls and tilings]
\label{lem:tilings}
For every \(c\in\Sigma^n\),
\[
    \B(c)=\{E_\tau(c):\tau\text{ is a domino tiling of }[n]\}.
\]
Moreover, \(E_\tau\) is an involution.  Hence \(z\in\B(c)\) if and only if
\(c=E_\tau(z)\) for some domino tiling \(\tau\).
\end{lemma}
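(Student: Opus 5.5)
The plan is to prove the two assertions of Lemma~\ref{lem:tilings} separately: first that each \(E_\tau\) is an involution, then that \(\B(c)\) is exactly the set of all \(E_\tau(c)\).

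For the involution, note that a domino tiling \(\tau\) of \([n]\) is the same object as a set \(S\) of pairwise disjoint adjacent transpositions \((i,i+1)\): the dominoes of \(\tau\) are the chosen pairs, and the singletons are the untouched coordinates. Write \(\pi_\tau\in S_n\) for the product of the transpositions of \(\tau\). These transpositions have disjoint supports, so they commute, and each is its own inverse. Hence \(\pi_\tau^2=Id\), and therefore \(E_\tau(E_\tau(c))=c\) for every \(c\).

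For the ball description, recall that \(\B(c)=B_1(c)\) is defined as the set of \(\pi c\) with \(\wt(\pi)\le1\). So the key step is the standard fact that the permutations of \(\ell_\infty\)-weight at most one are exactly the products of disjoint adjacent transpositions. The paper already uses this fact informally when it says that every admissible permutation is such a product, but I will verify it.

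\begin{itemize}
\item \emph{Products have weight at most one.} If \(\pi=\pi_\tau\), then each coordinate either stays fixed or moves by one position, so \(\wt(\pi)\le1\).
\item \emph{Weight at most one gives a product.} Conversely, let \(\wt(\pi)\le1\). I argue by induction on \(n\), looking at position \(1\).
  \begin{itemize}
  \item If \(\pi(1)=1\), restrict \(\pi\) to \(\{2,\dots,n\}\) and apply induction.
  \item Otherwise \(\pi(1)=2\). Then some \(j\) has \(\pi(j)=1\), and \(\lvert j-1\rvert\le1\) forces \(j=2\). So \(\pi\) swaps \(1\) and \(2\), and its restriction to \(\{3,\dots,n\}\) is a permutation of weight at most one, to which induction applies.
  \end{itemize}
\end{itemize}

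This gives a bijection between weight-\(\le1\) permutations and domino tilings, and the action \(z_i=c_{\pi(i)}\) is precisely \(E_\tau(c)\). Hence \(\B(c)=\{E_\tau(c):\tau\text{ a domino tiling of }[n]\}\).

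The final equivalence follows by combining the two parts. If \(z=E_\tau(c)\), then applying \(E_\tau\) and using the involution gives \(c=E_\tau(z)\). The converse holds by the same argument with the roles of \(c\) and \(z\) exchanged.

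The only step with any substance is the inductive characterization of the weight-\(\le1\) permutations, and even that is routine. I expect no real obstacle; the main care needed is to state the tiling–permutation correspondence cleanly, so that the convention \(z_i=c_{\pi(i)}\) matches the definition of \(E_\tau\).
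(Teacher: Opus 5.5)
Your proposal is correct and follows essentially the same route as the paper. Domino tilings give disjoint adjacent swaps, permutations of \(\ell_\infty\)-weight at most one decompose into disjoint adjacent transpositions, and the involution property comes from disjoint transpositions squaring to the identity; your left-to-right induction on position \(1\) is just a more explicit version of the paper's brief ``impossible chain of shifts'' argument for the decomposition step.
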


\ifshowproofs
\begin{proof}
Every domino tiling describes a set of pairwise disjoint adjacent transpositions,
so \(E_\tau(c)\in\B(c)\).  Conversely, let \(\pi\in S_n\) satisfy
\[
    \max_i\lvert\pi(i)-i\rvert\le1.
\]
If \(\pi(i)=i+1\), then necessarily \(\pi(i+1)=i\); otherwise injectivity of
\(\pi\) would force an impossible chain of shifts.  Hence every non-fixed point
belongs to an adjacent transposition, and the remaining points are fixed.
Applying the same swaps twice gives the identity, so \(E_\tau\) is an involution.
\end{proof}
\fi

We next define the automaton that recognizes \(P^*\).  Let
\[
    \mathcal A_P=(V,\delta,r_0)
\]
be the trie automaton of \(P\).  The root state is \(r_0\), corresponding to the
empty prefix \(\epsilon\).  The set \(V\) consists of all proper prefixes of
blocks in \(P\), including \(\epsilon\).  We identify each state \(v\in V\) with
the prefix word it represents.

The transition function
\[
    \delta:V\times\Sigma\to V
\]
is partial.  If the prefix represented by \(v\), followed by \(b\in\Sigma\), is a
proper prefix of a block in \(P\), then \(\delta(v,b)\) is the state representing
that new prefix.  If appending \(b\) completes a block of \(P\), then
\[
    \delta(v,b)=r_0.
\]
Otherwise \(\delta(v,b)\) is undefined.  We write \(\widehat\delta\) for the
natural extension of \(\delta\) to words.

\begin{lemma}[Trie recognizer]
\label{lem:trie}
If \(P\) is prefix-free, then
\[
    w\in P^*
    \qquad\Longleftrightarrow\qquad
    \widehat\delta(r_0,w)=r_0.
\]
\end{lemma}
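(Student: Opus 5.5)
The plan is to prove both directions by induction on the length of \(w\), using the key invariant that the state reached from \(r_0\) after reading a word records exactly the unfinished portion of the current block. Precisely, we prove the following auxiliary claim. For every word \(w\) and every state \(v\in V\), \(\widehat\delta(r_0,w)=v\) holds if and only if \(w=p_1\cdots p_m v\) for some \(m\ge0\) and blocks \(p_i\in P\), where \(v\) is read as the proper prefix it represents. The lemma is then the special case \(v=r_0=\epsilon\), since \(w=p_1\cdots p_m\) says exactly that \(w\in P^*\).

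For the forward direction (automaton \(\Rightarrow\) decomposition), induct on \(\lvert w\rvert\). The base case is \(w=\epsilon\), which gives state \(r_0\) and \(m=0\). For the step, write \(w=w'b\) with \(\widehat\delta(r_0,w')=v'\). By induction, \(w'=p_1\cdots p_m v'\). By the definition of \(\delta\), one of two things happens:
\begin{itemize}
\item \(v'b\) is a proper prefix of some block, so the new state represents \(v'b\) and \(w=p_1\cdots p_m(v'b)\);
\item \(v'b\in P\), so the new state is \(r_0\) and \(w=p_1\cdots p_m p_{m+1}\) with \(p_{m+1}=v'b\).
\end{itemize}

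The backward direction (decomposition \(\Rightarrow\) automaton) also goes by induction, reading one block at a time. From \(r_0\), read a block \(p\in P\) letter by letter. Every proper prefix of \(p\) lies in \(V\), and the transitions follow those prefixes. On the last letter, \(\delta\) returns to \(r_0\) because the block is completed. This requires that \(\delta\) be a well-defined function, which is the one place where care is needed.

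The main subtlety is exactly that well-definedness. The definition of \(\delta(v,b)\) has two clauses: \(vb\) is a proper prefix of a block, or \(vb\) completes a block. These two clauses must never conflict, and this is where prefix-freeness enters. If \(vb\in P\) were also a proper prefix of another block, \(P\) would not be prefix-free. Hence the clauses are mutually exclusive and \(\delta\) is a genuine partial function. In the backward direction, this guarantees that reading a full block \(p\) returns to \(r_0\) rather than continuing into a longer block's prefix. Without prefix-freeness the trie would have to choose at such a node, and the equivalence could fail.
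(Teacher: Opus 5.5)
Your proof is correct and follows the same route as the paper. Prefix-freeness makes the ``complete a block'' and ``continue a proper prefix'' clauses of \(\delta\) mutually exclusive, so reading a block from \(r_0\) returns to \(r_0\); conversely, the successive returns to the root cut \(w\) into blocks, and your invariant \(\widehat\delta(r_0,w)=v\iff w\in P^*v\) is a more careful formalization of that informal argument.
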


\ifshowproofs
\begin{proof}
Whenever a block is completed, prefix-freeness guarantees that there is no
ambiguity between stopping the current block and continuing it; the transition
returns to the root.  Therefore every concatenation of blocks returns to the
root after each block and at the end.  Conversely, if reading \(w\) from the root
ends at the root, then the successive returns to the root decompose \(w\) into
blocks from \(P\).
\end{proof}
\fi

The parser reads the received word \(z\) one symbol at a time and feeds a
reconstructed transmitted word into \(\mathcal A_P\).  A parser micro-state is a
pair
\[
    (v,p),
\]
where \(v\in V\) is the current trie state and
\[
    p\in\{\circ,H_0,H_1\}.
\]
The symbol \(\circ\) means that no swap is pending.  For \(h\in\Sigma\), the
state \(H_h\) means that the parser has already read the first received symbol
\(h\) of a domino, has not yet emitted it, and is waiting for the next received
symbol before emitting the swapped pair.

On reading a received symbol \(\beta\), the allowed parser moves are:
\begin{itemize}[leftmargin=*]
    \item \textbf{copy} from \((v,\circ)\): emit \(\beta\) and move to
    \[
        (\delta(v,\beta),\circ),
    \]
    provided \(\delta(v,\beta)\) is defined;

    \item \textbf{swap-start} from \((v,\circ)\): emit nothing and move to
    \[
        (v,H_\beta);
    \]

    \item \textbf{swap-resolve} from \((v,H_h)\): emit \(\beta h\) and move to
    \[
        (\delta(\delta(v,\beta),h),\circ),
    \]
    provided both trie transitions are defined.
\end{itemize}
Thus the parser consumes exactly one received symbol per step, but emits either
\(0\), \(1\), or \(2\) reconstructed symbols.

\begin{lemma}[Parser--ball correspondence]
\label{lem:parser}
Fix \(z\in\Sigma^n\).  The accepting parser runs on \(z\), from
\((r_0,\circ)\) to \((r_0,\circ)\), are in bijection with pairs \((\tau,w)\),
where \(\tau\) is a domino tiling of \([n]\), \(w=E_\tau(z)\), and \(w\in P^*\).
Consequently, there is an accepting parser run on \(z\) feeding the word \(w\) if
and only if \(w\in P^*\) and \(z\in\B(w)\).
\end{lemma}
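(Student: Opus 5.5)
The plan is to build an explicit bijection between accepting runs and pairs \((\tau,w)\), using Lemma~\ref{lem:tilings} and Lemma~\ref{lem:trie}. The backbone is an invariant that will be proved by induction on the number \(k\) of received symbols read. After reading \(z_1\cdots z_k\), a run sits in state \((v,\circ)\) precisely when it has determined a domino tiling \(\tau_k\) of \([k]\) together with the emitted word \(E_{\tau_k}(z_1\cdots z_k)\), and \(v=\widehat\delta(r_0,E_{\tau_k}(z_1\cdots z_k))\). Likewise, the run sits in \((v,H_h)\) precisely when it has determined a tiling of \([k-1]\), plus the marker that position \(k\) opens a domino, \(h=z_k\), and \(v\) is the trie state after the emitted prefix of length \(k-1\).

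The move-by-move correspondence will be set up as follows. A copy move appends the singleton \(\{k+1\}\) and emits \(z_{k+1}\). A swap-start move opens the domino \(\{k+1,k+2\}\). A swap-resolve move closes that domino and emits \(z_{k+2}z_{k+1}\), which is exactly the swapped pair that \(E_\tau\) produces on that domino. Because every move is forced by the choice of the next tiling part, the sequence of moves and the tiling \(\tau\) determine each other. Moreover, the emitted word is always \(E_\tau(z)\), read left to right.

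For an accepting run, the final state is \((r_0,\circ)\), so no domino is left open and \(\tau\) is a genuine tiling of \([n]\). The emitted word \(w=E_\tau(z)\) satisfies \(\widehat\delta(r_0,w)=r_0\), and every intermediate transition was defined. By Lemma~\ref{lem:trie}, which uses prefix-freeness, this gives \(w\in P^*\).

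For the converse, start from a tiling \(\tau\) with \(E_\tau(z)\in P^*\). I will run the corresponding move sequence and check that each required transition is defined. This holds because \(\widehat\delta(r_0,w)\) is defined along all prefixes of \(w\) (the partial function is defined on every prefix of a word in \(P^*\)). Injectivity of the map in both directions is immediate from determinism of \(\delta\) once the moves are fixed.

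The final equivalence, \(z\in\B(w)\) and \(w\in P^*\), then follows from Lemma~\ref{lem:tilings}: since \(E_\tau\) is an involution, \(z\in\B(w)\) if and only if \(w=E_\tau(z)\) for some tiling \(\tau\).

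The main obstacle I expect is a subtle point about the trie. A state \(v\) is the same object whether we reached the root by completing a block or started from it, and \(\delta\) is a function only because of prefix-freeness. If \(P\) were not prefix-free, \(\widehat\delta\) could fail to track membership in \(P^*\), and the bijection would count parses rather than words. I will handle this by citing Lemma~\ref{lem:trie} and noting explicitly that the correspondence is with pairs \((\tau,w)\), not with block decompositions of \(w\). Since \(P\) is prefix-free, \(w\) has a unique decomposition, so the two coincide.
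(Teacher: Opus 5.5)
Your proposal is correct and follows essentially the same route as the paper. Copy moves correspond to singletons, and a swap-start/swap-resolve pair corresponds to a domino emitting the swapped symbols, so the fed word is \(E_\tau(z)\); ending at \((r_0,\circ)\) means no open domino plus trie acceptance, which Lemma~\ref{lem:trie} turns into \(w\in P^*\), and Lemma~\ref{lem:tilings} gives the ball form. Your inductive invariant and your explicit check that every trie transition is defined along prefixes of a word in \(P^*\) just make rigorous what the paper's short proof leaves implicit.
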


\ifshowproofs
\begin{proof}
A copy move corresponds to a singleton of the tiling and feeds the current
received symbol unchanged.  A swap-start followed by a swap-resolve corresponds
to a domino: if the two received symbols are \(h,\beta\), then the parser feeds
\(\beta h\), which is exactly the inverse adjacent transposition.  Since
\(E_\tau\) is an involution by Lemma~\ref{lem:tilings}, the word fed into the
trie is \(E_\tau(z)\).  The parser ends at \((r_0,\circ)\) exactly when there is
no unresolved domino and the fed word is accepted by the trie.  By
Lemma~\ref{lem:trie}, this is equivalent to the fed word belonging to \(P^*\).
\end{proof}
\fi

\subsection{The product automaton}

The exact verifier runs two copies of the parser on the same received word
\(z\), in lockstep.  At each step, both parsers consume the same received symbol
\(\beta\in\Sigma\), but they may emit different numbers of reconstructed
symbols.  The product automaton compares the two reconstructed words online,
while storing at most one unmatched emitted symbol.

The synchronization invariant is the following.  After \(k\) lockstep steps,
each parser has consumed \(k\) received symbols.  A parser whose pending
component is \(\circ\) has emitted \(k\) reconstructed symbols, while a parser
whose pending component is \(H_0\) or \(H_1\) has emitted \(k-1\) reconstructed
symbols.  Hence the two emitted prefixes differ in length by at most one.
Therefore one unmatched symbol is enough for synchronization.

We define the product automaton \(\mathcal M_P\).  Its states are tuples
\[
    (v_1,p_1,v_2,p_2,\mathrm{div},\mathrm{lead}),
\]
where
\[
    v_1,v_2\in V,
    \qquad
    p_1,p_2\in\{\circ,H_0,H_1\},
\]
\[
    \mathrm{div}\in\{\bot,\top\},
\]
and
\[
    \mathrm{lead}\in
    \{\circ,L_1(0),L_1(1),L_2(0),L_2(1)\}.
\]
The component \(\mathrm{lead}\) is a one-symbol buffer used to synchronize the two
reconstructed words.  It records whether one parser has emitted one symbol that
has not yet been matched with a symbol emitted by the other parser.  The value
\(\mathrm{lead}=\circ\) means that the two reconstructed prefixes are currently
aligned.  The value \(L_i(a)\), where \(i\in\{1,2\}\) and \(a\in\Sigma\), means
that parser \(i\) is one symbol ahead and that its unmatched emitted symbol is
\(a\).

The flag \(\mathrm{div}\) records whether the two reconstructed words have
already differed at an aligned coordinate: \(\bot\) means that no difference has
yet been detected, whereas \(\top\) means that such a difference has been
detected.  The initial state is
\[
    (r_0,\circ,r_0,\circ,\bot,\circ).
\]

We now define the transitions.  Fix a received symbol \(\beta\in\Sigma\).  Each
parser chooses one legal parser move on input \(\beta\).  Suppose parser \(i\)
moves from \((v_i,p_i)\) to \((v_i',p_i')\) and emits a word
\[
    \eta_i\in\Sigma^{\le2}.
\]
Thus \(\eta_i\) is \(\beta\) for a copy move, \(\epsilon\) for a swap-start move,
and \(\beta h\) for a swap-resolve move from pending state \(H_h\).

The previous lead is attached to the appropriate side.  Define
\(\xi_1,\xi_2\in\Sigma^*\) by
\[
\begin{cases}
\xi_1=\eta_1,\quad \xi_2=\eta_2,
    & \text{if } \mathrm{lead}=\circ,\\[1mm]
\xi_1=a\eta_1,\quad \xi_2=\eta_2,
    & \text{if } \mathrm{lead}=L_1(a),\\[1mm]
\xi_1=\eta_1,\quad \xi_2=a\eta_2,
    & \text{if } \mathrm{lead}=L_2(a).
\end{cases}
\]
Compare \(\xi_1\) and \(\xi_2\) from left to right for as long as both have a
symbol available.  If any compared pair is unequal, set
\[
    \mathrm{div}'=\top;
\]
otherwise keep the previous value, \(\mathrm{div}'=\mathrm{div}\).

After the common aligned prefix has been compared and deleted, the
synchronization invariant implies that at most one symbol remains.  If no symbol
remains, set \(\mathrm{lead}'=\circ\).  If one symbol \(a\) remains on the first
parser side, set \(\mathrm{lead}'=L_1(a)\).  If one symbol \(a\) remains on the
second parser side, set \(\mathrm{lead}'=L_2(a)\).  This gives a transition from
\[
    (v_1,p_1,v_2,p_2,\mathrm{div},\mathrm{lead})
\]
to
\[
    (v_1',p_1',v_2',p_2',\mathrm{div}',\mathrm{lead}').
\]
Only transitions arising from legal parser moves are included.

A product state is accepting if
\[
    v_1=v_2=r_0,
    \qquad
    p_1=p_2=\circ,
    \qquad
    \mathrm{lead}=\circ,
    \qquad
    \mathrm{div}=\top.
\]
The first three conditions say that both parsers have finished clean
reconstructions of words in \(P^*\) of the same length.  The condition
\(\mathrm{div}=\top\) says that the two reconstructed words are distinct.

\begin{theorem}[The product automaton decides confusability]
\label{thm:verifier}
Assume \(P\subseteq\Sigma^*\setminus\{\epsilon\}\) is finite and prefix-free.  An
accepting state of \(\mathcal M_P\) is reachable if and only if there exist a
length \(n\), two distinct codewords \(c_1,c_2\in C_n(P)\), and a word
\(z\in\Sigma^n\) such that
\[
    z\in\B(c_1)\cap\B(c_2).
\]
Consequently, \(C_n(P)\) is correcting for every \(n\ge0\) if and only if no
accepting product state is reachable.
\end{theorem}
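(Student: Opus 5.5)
The plan is to establish, by induction on the number of lockstep steps, a precise invariant describing what a path in \(\mathcal M_P\) encodes, and then to read both directions of the theorem off this invariant. Concretely, I would prove the following claim. For a received prefix \(z_1\cdots z_k\), there is a path of length \(k\) from the initial state to \((v_1,p_1,v_2,p_2,\mathrm{div},\mathrm{lead})\) if and only if there exist two parser runs on \(z_1\cdots z_k\) with the following properties. Run \(i\) ends in \((v_i,p_i)\) and has fed a word \(w_i\). We have \(|w_i|=k\) if \(p_i=\circ\) and \(|w_i|=k-1\) otherwise. Moreover, writing \(m=\min(|w_1|,|w_2|)\):
\begin{itemize}
\item \(\mathrm{lead}=\circ\) exactly when \(|w_1|=|w_2|\);
\item \(\mathrm{lead}=L_i(a)\) exactly when \(w_i\) is the longer word and \(a\) is its last symbol;
\item \(\mathrm{div}=\top\) exactly when \(\prefx_m(w_1)\ne\prefx_m(w_2)\).
\end{itemize}

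The inductive step is a local check of the transition rule. Prepending the old lead symbol to the appropriate \(\eta_i\) gives \(\xi_i\), which are precisely the not-yet-compared suffixes of the extended words \(w_i\eta_i\). Comparing them position by position therefore updates \(\mathrm{div}\) correctly. The synchronization invariant (emitted lengths \(k\) or \(k-1\)) guarantees that at most one symbol is left over, so \(\mathrm{lead}'\) is well defined. I will also note that the trie components evolve exactly as in single parser runs, so legality of product moves coincides with legality of each parser's move. This gives both directions of the claim at once, since each product transition arises from, and gives rise to, a pair of parser moves.

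With the claim in hand, the theorem follows directly. An accepting state reached after \(n\) steps has \(p_1=p_2=\circ\), so \(|w_1|=|w_2|=n\). It also has \(\mathrm{lead}=\circ\), so the two words are fully compared, and \(\mathrm{div}=\top\) then says \(w_1\ne w_2\). Finally, \(v_1=v_2=r_0\) together with \(p_i=\circ\) means each run is accepting. Lemma~\ref{lem:parser} then gives \(w_i\in P^*\) and \(z\in\B(w_i)\), so \(c_i=w_i\) are distinct codewords of \(C_n(P)\) that are confusable. Conversely, suppose \(z\in\B(c_1)\cap\B(c_2)\) with \(c_1\ne c_2\). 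By Lemma~\ref{lem:tilings} there are tilings \(\tau_i\) with \(c_i=E_{\tau_i}(z)\). Lemma~\ref{lem:parser} turns these into accepting parser runs on \(z\) feeding \(c_1\) and \(c_2\), and the claim produces a product path ending in an accepting state, with \(\mathrm{div}=\top\) because \(c_1\ne c_2\) have equal length. The final ``consequently'' statement is just the contrapositive, combined with the observation that reachability in a finite graph is decidable.

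The main obstacle is the bookkeeping in the inductive step. One has to verify that the case analysis on \((p_1,p_2,\mathrm{lead})\) never leaves two or more unmatched symbols, and that comparisons are never skipped or duplicated. In particular, a swap-resolve move emits two symbols while the other parser may emit zero. I would handle this by checking that the length difference \(|w_1|-|w_2|\) always lies in \(\{-1,0,1\}\) and is determined by \(p_1\) and \(p_2\). I would also use \(\xi_i\) as an explicit description of the uncompared suffixes, and then verify the few move combinations directly.
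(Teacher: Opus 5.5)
Your proposal is correct and follows essentially the same route as the paper. Both directions reduce to Lemma~\ref{lem:parser} plus the synchronization invariant and the \(\mathrm{div}/\mathrm{lead}\) bookkeeping; your induction characterizing reachable product states makes explicit what the paper's proof invokes informally, and the appeal to decidability of reachability is not needed for the ``consequently'' clause, which is simply the contrapositive.
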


\ifshowproofs
\begin{proof}
Suppose first that an accepting product state is reachable after reading some
received word \(z\).  Since both parsers end at \((r_0,\circ)\),
Lemma~\ref{lem:parser} gives two words \(c_1,c_2\in P^*\) such that
\[
    z\in\B(c_1)\cap\B(c_2).
\]
The accepting conditions \(p_1=p_2=\circ\) and \(\mathrm{lead}=\circ\) imply, by
the synchronization invariant, that both reconstructed words have the same
length as \(z\).  Hence
\[
    c_1,c_2\in C_{\lvert z\rvert}(P).
\]
Since \(\mathrm{lead}=\circ\), every aligned coordinate has been compared, and
\(\mathrm{div}=\top\) means that some compared coordinate differed.  Therefore
\(c_1\ne c_2\).

Conversely, suppose that \(c_1,c_2\in C_n(P)\) are distinct and that
\(z\in\B(c_1)\cap\B(c_2)\).  By Lemma~\ref{lem:tilings}, there are domino tilings
\(\tau_1,\tau_2\) with
\[
    c_1=E_{\tau_1}(z),
    \qquad
    c_2=E_{\tau_2}(z).
\]
By Lemma~\ref{lem:parser}, these tilings define two accepting parser runs on
\(z\).  Running the two parser runs in lockstep gives a path in
\(\mathcal M_P\).  The synchronization invariant guarantees that the lead
component never needs to store more than one symbol.  Since \(c_1\ne c_2\), some
aligned coordinate differs, so \(\mathrm{div}\) is eventually set to \(\top\).
Both parser runs end at \((r_0,\circ)\), and the final lead is \(\circ\).
Therefore the terminal product state is accepting.
\end{proof}
\fi

\begin{theorem}[Bound-free all-length certificate]
\label{thm:terminate}
Let \(\mathcal A_P=(V,\delta,r_0)\) be the trie automaton of a finite prefix-free
binary block set \(P\).  The product automaton \(\mathcal M_P\) has at most
\(90\lvert V\rvert^2\) states.  Therefore a breadth-first search over reachable
product states terminates using \(O(\lvert V\rvert^2)\) memory, independently of
any codeword length.  The search returns \texttt{SUCCESS} if and only if
\(C_n(P)\) is correcting for every \(n\ge0\); otherwise it returns a shortest
collision.
\end{theorem}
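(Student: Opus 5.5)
The plan is to count the components of a product state, note that the reachable-state graph is finite, and then read off correctness of breadth-first search from Theorem~\ref{thm:verifier}.

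\textbf{Step 1: counting states.} A state of \(\mathcal M_P\) is a tuple \((v_1,p_1,v_2,p_2,\mathrm{div},\mathrm{lead})\), and the ranges of the components are as follows.
\begin{itemize}
    \item The trie states satisfy \(v_1,v_2\in V\), giving \(\lvert V\rvert^2\) choices.
    \item The pending components satisfy \(p_1,p_2\in\{\circ,H_0,H_1\}\), giving \(3\cdot3=9\) choices.
    \item The flag \(\mathrm{div}\in\{\bot,\top\}\) gives \(2\) choices.
    \item The buffer \(\mathrm{lead}\in\{\circ,L_1(0),L_1(1),L_2(0),L_2(1)\}\) gives \(5\) choices.
\end{itemize}
Multiplying gives at most \(9\cdot2\cdot5\,\lvert V\rvert^2=90\lvert V\rvert^2\) states. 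The set \(V\) of proper prefixes of blocks is finite because \(P\) is finite, so \(\mathcal M_P\) is a finite directed graph.

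\textbf{Step 2: bounded out-degree.} From any state and any input \(\beta\in\Sigma\), each parser has at most two legal moves. From \(\circ\) these are copy and swap-start; from \(H_h\) the only move is swap-resolve. The comparison step that updates \(\mathrm{div}\) and \(\mathrm{lead}\) is deterministic given the two moves. Hence each state has at most \(2\cdot2\cdot2=8\) successors.

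\textbf{Step 3: termination and memory.} Breadth-first search from the initial state \((r_0,\circ,r_0,\circ,\bot,\circ)\) visits each reachable state at most once. It therefore terminates after \(O(\lvert V\rvert^2)\) state expansions and edge inspections. It stores a visited set and a queue of size \(O(\lvert V\rvert^2)\), together with a parent pointer and the input symbol per state if a witness is to be output. None of this depends on any codeword length \(n\), which gives the bound-free claim.

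\textbf{Step 4: correctness of the verdict.} By Theorem~\ref{thm:verifier}, an accepting state is reachable if and only if some \(C_n(P)\) contains two distinct confusable codewords. So the search returns \texttt{SUCCESS}, meaning no accepting state was found, exactly when every \(C_n(P)\) is correcting. The initial state is not accepting, since its flag is \(\mathrm{div}=\bot\).

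\textbf{Step 5: shortest collision.} Each edge of \(\mathcal M_P\) consumes exactly one received symbol. Hence a path of length \(n\) from the initial state corresponds to a received word \(z\) of length \(n\). Breadth-first search discovers states in order of their distance from the start, so the first accepting state found lies at minimal depth \(n\). Following parent pointers recovers \(z\) and the two parser runs. By Lemma~\ref{lem:parser}, these runs yield distinct codewords \(c_1,c_2\in C_n(P)\) with \(z\in\B(c_1)\cap\B(c_2)\). Conversely, any collision of length \(n'\) gives, by the converse direction of Theorem~\ref{thm:verifier}, an accepting state at depth \(n'\). Therefore the returned collision has minimal length.

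The only delicate point is this minimality argument: "shortest" must mean minimal codeword length. This holds because the lockstep product consumes exactly one received symbol per transition, so BFS depth equals \(\lvert z\rvert\), and \(\lvert z\rvert\) equals the common codeword length by the synchronization invariant at acceptance. The remaining steps are bookkeeping.
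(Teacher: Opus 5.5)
Your proposal is correct and follows the paper's proof. Both count the state components as \(\lvert V\rvert^2\cdot3^2\cdot2\cdot5=90\lvert V\rvert^2\), use finiteness of the transition relation for BFS termination, and appeal to Theorem~\ref{thm:verifier} for the verdict; your out-degree bound of \(8\) and your observation that BFS depth equals the received length \(\lvert z\rvert\) just make explicit what the paper states in one line about shortest collisions.
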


\ifshowproofs
\begin{proof}
There are \(\lvert V\rvert^2\) choices for the two trie states
\((v_1,v_2)\), \(3^2\) choices for the two pending states \((p_1,p_2)\), two
choices for \(\mathrm{div}\), and five choices for \(\mathrm{lead}\).  Hence the
number of product states is at most
\[
    \lvert V\rvert^2\cdot 3^2\cdot2\cdot5
    =
    90\lvert V\rvert^2.
\]
The transition relation is finite, so breadth-first search from the initial
state
\[
    (r_0,\circ,r_0,\circ,\bot,\circ)
\]
terminates after exploring at most this many states.  The correctness and
shortest-collision statements follow from Theorem~\ref{thm:verifier}; breadth
first search returns a collision of minimum received length when one exists.
\end{proof}
\fi

\subsection{Comparison with the local criterion}

The product automaton and the two-stage local criterion serve different roles.
The criterion is the main mathematical certificate used for the constructions in
Section~\ref{sec:construction}: it gives a short, checkable proof based only on
local block-pair tests.  The verifier is exact for a fixed prefix-free block set:
it tests ball-disjointness itself rather than a sufficient local condition.

Both tools certify all lengths simultaneously.  The criterion does this through
the induction in Theorem~\ref{thm:criterion}; the verifier does it through
finite-state reachability in Theorem~\ref{thm:terminate}.  Neither requires a
length cutoff.

Neither tool imposes an asymptotic rate ceiling below the true zero-error
capacity.  For the criterion this is Theorem~\ref{thm:complete}, which gives
\[
    \rho_{\mathrm{crit}}^{(q)}=C_0^{(q)}
\]
for every alphabet size \(q\).  The same conclusion holds for the binary
verifier, since every criterion-certified binary block set is accepted by the
exact verifier, while any accepted block set gives correcting codes and therefore
cannot have rate exceeding \(C_0\).

The criterion is \(q\)-ary as stated in Section~\ref{sec:criterion}.  The
verifier above is written for the binary case, where a pending swap stores one
bit.  The same product-automaton idea extends directly to a \(q\)-ary alphabet:
the pending state stores one symbol, and the one-symbol lead buffer has
\(1+2q\) possibilities.  Thus the binary state bound \(90\lvert V\rvert^2\)
becomes
\[
    2(q+1)^2(1+2q)\lvert V\rvert^2
\]
states in the direct \(q\)-ary generalization.

\section{Error detection}
\label{sec:detect}

As in classical coding theory, we also study the corresponding error-detection
problem.  Detection is operationally natural in settings with feedback: the
receiver may not be able, or may not need, to recover the transmitted word from
the corrupted output, but it can recognize that an error occurred and request a
retransmission from the sender.

At first sight, detection may appear to be only an easier variant of correction.
For the radius-one limited permutation channel, however, it leads to a genuinely
different problem.  For correction, the error balls around distinct codewords
must be disjoint.  For detection, intersections between error balls are allowed;
one only has to prevent a nontrivial channel output from being another codeword.
Thus the relevant obstruction is the directed condition \(c'\in\B(c)\), rather
than the symmetric condition \(\B(c)\cap\B(c')\ne\emptyset\).

Recall that a code \(C\subseteq\Sigma_q^n\) is detecting for \(\LPC(1)\) if
\[
    \B(c)\cap C=\{c\}
    \qquad
    \text{for every }c\in C .
\]
Equivalently, for distinct \(c,c'\in C\), one must have
\[
    c'\notin\B(c).
\]
Let \(D_q(n)\) denote the maximum size of a \(q\)-ary length-\(n\) detecting code
for \(\LPC(1)\), and define the \(q\)-ary detecting capacity by
\[
    R_{\mathrm{det},q}
    =
    \limsup_{n\to\infty}
    \frac{1}{n}\log_q D_q(n).
\]
The main construction in this section is binary, but the upper bound and the
local sufficient criterion below are \(q\)-ary.

\subsection{A \(q\)-ary pairing upper bound}

Partition the coordinates into adjacent pairs
\[
    (1,2),\ (3,4),\ \dots .
\]
For each \(q\)-ary pair, the unordered content is a multiset of size \(2\) over
\(\Sigma_q\).  Hence there are
\[
    \binom{q+1}{2}
\]
possible unordered pair contents.  If two words have the same unordered content
in every adjacent pair, then one can be obtained from the other by swapping some
subset of these adjacent pairs.  Such swaps are pairwise disjoint and therefore
form a valid \(\LPC(1)\) error.  Consequently, a detecting code contains at most
one word in each unordered-pair class.

For even \(n\), this gives
\[
    D_q(n)
    \le
    \binom{q+1}{2}^{n/2}.
\]
For odd \(n\), applying the same argument to the first \(n-1\) coordinates and
leaving the last coordinate free gives
\[
    D_q(n)
    \le
    q\binom{q+1}{2}^{(n-1)/2}.
\]
Therefore, for every \(q\ge2\),
\[
    R_{\mathrm{det},q}
    \le
    \frac12\log_q\binom{q+1}{2}.
\]
Specializing to \(q=2\), we obtain
\[
    R_{\mathrm{det},2}
    \le
    \frac12\log_2 3
    \approx0.792481.
\]

The same statement may be phrased graph-theoretically over any alphabet.  Let
\(G_n^{(q)}\) be the graph on \(\Sigma_q^n\) in which two distinct words \(u,v\)
are adjacent if \(v\in\B(u)\).  Since \(\LPC(1)\) is symmetric, this is an
undirected graph.

\begin{proposition}[Detection as independence]
\label{prop:indep}
For every \(q\ge2\), a code \(C\subseteq\Sigma_q^n\) is detecting for
\(\LPC(1)\) if and only if \(C\) is an independent set in \(G_n^{(q)}\).  Hence
\[
    D_q(n)=\alpha\bigl(G_n^{(q)}\bigr).
\]
For even \(n\), the pairing argument gives
\[
    \alpha\bigl(G_n^{(q)}\bigr)
    \le
    \binom{q+1}{2}^{n/2},
\]
and asymptotically
\[
    R_{\mathrm{det},q}
    \le
    \frac12\log_q\binom{q+1}{2}.
\]
\end{proposition}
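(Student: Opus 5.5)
The proof has three parts: the independence characterization, the pairing bound on \(\alpha(G_n^{(q)})\), and the asymptotic rate bound.

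\textbf{Independence characterization.} First I check that \(G_n^{(q)}\) is well defined as an undirected graph. If \(v=E_\tau(u)\) for a set \(\tau\) of disjoint adjacent transpositions, then applying the same swaps to \(v\) returns \(u\). This is the involution property, argued exactly as in Lemma~\ref{lem:tilings}; that argument uses nothing specific to the binary alphabet. Hence \(v\in\B(u)\) if and only if \(u\in\B(v)\), so adjacency is symmetric. By Definition~\ref{def:detection}, \(C\) is detecting if and only if no two distinct \(c,c'\in C\) satisfy \(c'\in\B(c)\). This says precisely that no two vertices of \(C\) are adjacent, i.e.\ \(C\) is independent. Taking the maximum over all \(C\) gives \(D_q(n)=\alpha(G_n^{(q)})\).

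\textbf{Pairing bound for even \(n\).} I partition \(\Sigma_q^n\) into classes indexed by the sequence of unordered contents of the pairs \((2i-1,2i)\), for \(i=1,\dots,n/2\). Each pair content is a multiset of size two over \(\Sigma_q\), so there are \(\binom{q+1}{2}\) choices per pair and \(\binom{q+1}{2}^{n/2}\) classes in total.

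Now take two distinct words \(u,v\) in the same class. In each pair position, \(v\) agrees with \(u\) either directly or after swapping the two symbols. Let \(S\) be the set of pair positions where a swap is needed. These pairs are pairwise disjoint adjacent transpositions, so \(v\in\B(u)\). Therefore every class is a clique in \(G_n^{(q)}\). An independent set meets each clique in at most one vertex, which gives \(\alpha(G_n^{(q)})\le\binom{q+1}{2}^{n/2}\).

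\textbf{Asymptotic bound.} For odd \(n\) I apply the same clique partition to the first \(n-1\) coordinates and let the last symbol range freely. This gives \(D_q(n)\le q\binom{q+1}{2}^{(n-1)/2}\). Taking \(\frac1n\log_q\) of both bounds, the factor \(q\) and the shift from \(n\) to \(n-1\) contribute only \(O(1/n)\). Letting \(n\to\infty\) yields \(R_{\mathrm{det},q}\le\frac12\log_q\binom{q+1}{2}\).

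No step is a real obstacle. The only point needing care is the symmetry of the adjacency relation, since the involution argument in the paper was stated for the binary alphabet. It carries over verbatim because a swap of positions is independent of the symbols stored there.
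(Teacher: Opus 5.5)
Your proposal is correct and follows essentially the same route as the paper. You read the independence characterization directly off the definition of detection, and you get the bound from the same partition of \(\Sigma_q^n\) into unordered-pair-content classes, with the last coordinate split off for odd \(n\); your clique phrasing and your involution check that adjacency is symmetric just spell out details the paper leaves implicit.
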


\ifshowproofs
\begin{proof}
By definition, \(C\) is detecting if and only if no two distinct words
\(u,v\in C\) satisfy \(v\in\B(u)\).  This is exactly the condition that no edge
of \(G_n^{(q)}\) has both endpoints in \(C\).  Thus \(C\) is detecting if and only
if it is independent in \(G_n^{(q)}\).  The stated bound follows from the
pairing argument above.
\end{proof}
\fi

\subsection{A binary detecting construction of rate \(0.756707\)}

We now give a binary block-concatenation construction for detection.  Let \(H\)
be the following set of binary blocks:
\begingroup
\small
\setlength{\tabcolsep}{5pt}
\renewcommand{\arraystretch}{1.12}
\begin{center}
\begin{tabular}{@{}c p{0.78\textwidth}@{}}
\toprule
Length & Blocks in \(H\) \\
\midrule
6 &
\(\{000000,\ 000111,\ 001111,\ 011000\}\) \\[1mm]

7 &
\(\{0000111,\ 0001000,\ 0111000,\ 0111111,\ 0011011,\ 0011001,\ 0010011\}\) \\[1mm]

8 &
\(\{01111000,\ 01110011,\ 01100111,\ 00100011,\ 00001100,\ 00000110\}\) \\[1mm]

9 &
\(\{000001111,\ 000110000,\ 001110000,\ 011101111,\ 011111000,\ 011101100\}\) \\[1mm]

11 &
\(\{00110000111,\ 01111001111\}\) \\
\bottomrule
\end{tabular}
\end{center}
\endgroup
Let \(\overline H\) denote the set of bitwise complements of words in \(H\), and
put
\[
    P_{\mathrm{det}}=H\cup\overline H.
\]
The block set \(P_{\mathrm{det}}\) is prefix-free.  Its length profile is
\[
    p_6=8,\qquad
    p_7=14,\qquad
    p_8=12,\qquad
    p_9=12,\qquad
    p_{11}=4.
\]
Hence
\[
    8\lambda^{-6}
    +14\lambda^{-7}
    +12\lambda^{-8}
    +12\lambda^{-9}
    +4\lambda^{-11}
    =
    1,
\]
so
\[
    \lambda\approx1.6896293138,
    \qquad
    R(P_{\mathrm{det}})=\log_2\lambda\approx0.756707.
\]

The exact verifier of Section~\ref{sec:verifier} can be modified for detection
by replacing the collision condition
\[
    \B(c_1)\cap\B(c_2)\ne\emptyset
\]
with the directed condition
\[
    c_2\in\B(c_1),
    \qquad
    c_1\ne c_2.
\]
Running this detecting version of the verifier on \(P_{\mathrm{det}}\) finds no
such pair in any length.  Therefore, for every \(n\ge0\), the finite-length code
\(C_n(P_{\mathrm{det}})\) is detecting for \(\LPC(1)\).  Consequently,
\[
    0.756707
    \le
    R_{\mathrm{det},2}
    \le
    \frac12\log_2 3
    \approx0.792481.
\]

\subsection{A \(q\)-ary local criterion for detection}
\label{subsec:detcrit}

The exact verifier already decides the detecting property for a fixed
prefix-free block set, after the directed modification described above.  We now
give a \(q\)-ary local sufficient criterion, analogous to the two-stage
criterion for correction.  The main difference is directedness: for correction,
the dangerous event for two words \(u,v\) is the symmetric intersection
\[
    \B(u)\cap\B(v)\ne\emptyset,
\]
whereas for detection the dangerous event is
\[
    v\in\B(u).
\]

\begin{definition}[Same-length detecting test]
\label{def:detbound}
Let \(u,v\in\Sigma_q^\ell\) be words of the same length.  The ordered pair
\((u,v)\) passes the same-length detecting test if
\[
    \prefx_{\ell-1}(v)\notin\Tball(u).
\]
Equivalently, no word \(u'\in\B(u)\) agrees with \(v\) in the first
\(\ell-1\) coordinates.
\end{definition}

\begin{definition}[Two-stage detecting test]
\label{def:dettwostage}
Fix a finite block set \(P\subseteq\Sigma_q^*\).  Let \(x,y\in P\) be blocks with
\[
    \ell_x=\lvert x\rvert
    <
    \ell_y=\lvert y\rvert .
\]
Put
\[
    d=\ell_y-\ell_x,
    \qquad
    y_0=\prefx_{\ell_x}(y).
\]
The ordered pair \((x,y)\) passes the two-stage detecting test if at least one
of the following two conditions holds:
\begin{enumerate}[label=\textup{(S\arabic*)}]
    \item \(\prefx_{\ell_x-1}(y_0)\notin\Tball(x)\);
    \item for every \(\rho\in\Pref_d(P^*)\), the same-length detecting test
    accepts \((x\rho,y)\), equivalently
    \[
        \prefx_{\ell_y-1}(y)\notin\Tball(x\rho).
    \]
\end{enumerate}
\end{definition}

\begin{lemma}[Window lemma]
\label{lem:window}
Let \(c,c'\in\Sigma_q^n\), and suppose that \(c'\in\B(c)\).  Let
\[
    S\subseteq\{1,\dots,n-1\}
\]
be a set of pairwise non-adjacent adjacent transpositions such that applying the
transpositions in \(S\) to \(c\) gives \(c'\).  Fix \(1\le\ell\le n\), and put
\[
    w=\prefx_\ell(c).
\]
Then:
\begin{enumerate}[label=\textup{(\alph*)}]
    \item \(\prefx_{\ell-1}(c')\in\Tball(w)\);
    \item if \(\ell=n\) or \(\ell\notin S\), then
    \(\prefx_\ell(c')\in\B(w)\).
\end{enumerate}
\end{lemma}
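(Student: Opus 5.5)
The plan is to split the transposition set \(S\) according to its position relative to the window \([1,\ell]\). Since the elements of \(S\) are pairwise non-adjacent, at most one of them involves the right boundary of the window, namely the boundary \(\ell\) (the swap of coordinates \(\ell\) and \(\ell+1\)). We therefore write
\[
    S_{<}=\{i\in S:i\le\ell-1\}.
\]
Every \(i\in S_{<}\) swaps two coordinates inside \([1,\ell]\), and the elements of \(S_{<}\) remain pairwise non-adjacent. Applying \(S_{<}\) to \(w=\prefx_\ell(c)\) therefore gives a word \(w'\in\B(w)\), by the definition of \(\B\), or equivalently by Lemma~\ref{lem:tilings}: the dominoes \(\{i,i+1\}\), \(i\in S_{<}\), extend to a domino tiling of \([\ell]\).

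Next I will compare \(w'\) with \(c'\) coordinate by coordinate on the first \(\ell-1\) positions. Fix \(j\le\ell-1\). The transpositions in \(S\) that can move coordinate \(j\) are \(j-1\) and \(j\). Both of these are at most \(\ell-1\), so both lie in \(S_{<}\) whenever they lie in \(S\). Hence \(c'_j=w'_j\) for all \(j\le\ell-1\), which gives
\[
    \prefx_{\ell-1}(c')=\prefx_{\ell-1}(w')\in\Tball(w).
\]
This proves (a).

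For (b), I will look at coordinate \(\ell\). It is affected only by transpositions \(\ell-1\) and \(\ell\). The transposition \(\ell-1\), if present, is in \(S_{<}\) and is accounted for in \(w'\). The transposition \(\ell\) is excluded by hypothesis: either \(\ell\notin S\), or \(\ell=n\), in which case \(\ell\notin\{1,\dots,n-1\}\). So \(c'_\ell=w'_\ell\) as well, and therefore \(\prefx_\ell(c')=w'\in\B(w)\).

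The main obstacle is the bookkeeping around the boundary transposition \(\ell\). In case (a) we need to check that it is harmless: it only alters coordinates \(\ell\) and \(\ell+1\), so the first \(\ell-1\) coordinates are unaffected. We also need to check that dropping the transpositions beyond the window does not break non-adjacency inside it. Both points reduce to the observation that each coordinate is touched by at most one transposition of \(S\).
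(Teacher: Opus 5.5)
Your proposal is correct and follows essentially the same route as the paper. Both proofs restrict \(S\) to \(S\cap\{1,\dots,\ell-1\}\), apply it to \(w\) to obtain \(w'\in\B(w)\), and then observe that coordinates \(j\le\ell-1\) (and coordinate \(\ell\) when the boundary \(\ell\) is not swapped) are affected only by boundaries already accounted for in \(w'\).
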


\ifshowproofs
\begin{proof}
Let
\[
    S'=S\cap\{1,\dots,\ell-1\}.
\]
The transpositions in \(S'\) act only inside the first \(\ell\) coordinates of
\(c\).  Let \(w'\) be the word obtained from \(w\) by applying the
transpositions in \(S'\).  Then \(w'\in\B(w)\).

For every \(j\le\ell-1\), coordinate \(j\) can be affected only by boundaries
\(j-1\) and \(j\), both at most \(\ell-1\).  Hence
\[
    \prefx_{\ell-1}(c')=\prefx_{\ell-1}(w'),
\]
and so \(\prefx_{\ell-1}(c')\in\Tball(w)\).  This proves (a).

For (b), if \(\ell=n\), then \(S'=S\).  If \(\ell<n\) and \(\ell\notin S\), no
transposition crosses the boundary between \(\ell\) and \(\ell+1\), so the first
\(\ell\) coordinates of \(c'\) are determined entirely by the transpositions in
\(S'\).  Therefore
\[
    \prefx_\ell(c')=w'\in\B(w).
\]
\end{proof}
\fi

\begin{theorem}[Detecting local criterion]
\label{thm:detcrit}
Let \(P\subseteq\Sigma_q^*\) be finite.  Assume:
\begin{enumerate}[label=\textup{(DC\arabic*)}]
    \item every ordered pair of distinct equal-length blocks in \(P\) passes the
    same-length detecting test;
    \item every ordered pair \((x,y)\in P^2\) with
    \(\lvert x\rvert<\lvert y\rvert\) passes the two-stage detecting test.
\end{enumerate}
Then, for every \(n\ge0\), the finite-length code \(C_n(P)\) is detecting for
\(\LPC(1)\).
\end{theorem}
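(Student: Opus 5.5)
We prove this by strong induction on \(n\), following the proof of Theorem~\ref{thm:criterion}. The main changes are that the directed relation \(c'\in\B(c)\) replaces confusability, and Lemma~\ref{lem:window} replaces the truncation argument. The case \(n=0\) is trivial. For \(n>0\), let \(c,c'\in C_n(P)\) be distinct and suppose, for contradiction, that \(c'\in\B(c)\). Write \(c=as\) and \(c'=bt\) with first blocks \(a,b\in P\) and \(s,t\in P^*\). Fix a realizing set \(S\) of pairwise non-adjacent transpositions, and discard ineffective swaps of equal symbols.

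\textbf{Step 1 (directed cancellation).} Suppose \(a=b\). I would prove a directed analogue of Lemma~\ref{lem:cancel}: if \(\alpha t\in\B(\alpha s)\), then \(t\in\B(s)\).

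Assume boundary \(1\) were used effectively. Then the first output symbol would be \(s_1\neq\alpha\), whereas the output \(\alpha t\) begins with \(\alpha\). Hence boundary \(1\) is unused. Deleting the first coordinate and shifting \(S\) then realizes \(t\) from \(s\).

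Iterating over the common prefix \(a\) gives \(t\in\B(s)\) with \(s\ne t\) and \(|s|=|t|<n\). This contradicts the induction hypothesis that \(C_{|s|}(P)\) is detecting. This step is simpler than the symmetric cancellation, since only one realization is involved.

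\textbf{Step 2 (equal lengths).} Suppose \(a\ne b\) and \(|a|=|b|=\ell\). Apply Lemma~\ref{lem:window}(a) to \(c\) with \(w=\prefx_\ell(c)=a\). This gives \(\prefx_{\ell-1}(b)=\prefx_{\ell-1}(c')\in\Tball(a)\). But then the ordered pair \((a,b)\) fails the same-length detecting test, contradicting \textup{(DC1)}.

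\textbf{Step 3 (unequal lengths).} This is the main point: the relation is directed, while the longer block may sit on either side.

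First consider \(|a|<|b|\), and put \(d=|b|-|a|\). Lemma~\ref{lem:window}(a) with \(\ell=|a|\) gives
\[
    \prefx_{|a|-1}(b)\in\Tball(a),
\]
so \textup{(S1)} fails for \((a,b)\). Hence \textup{(S2)} must hold. Since \(|s|=n-|a|\ge |b|-|a|=d\), the word \(\rho=\prefx_d(s)\) lies in \(\Pref_d(P^*)\). Applying Lemma~\ref{lem:window}(a) with \(\ell=|b|\) and \(w=a\rho\) gives
\[
    \prefx_{|b|-1}(b)\in\Tball(a\rho),
\]
which contradicts \textup{(S2)}.

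Now consider \(|a|>|b|\). Here I would use the symmetry of the channel: by Lemma~\ref{lem:tilings} each \(E_\tau\) is an involution, so \(c'\in\B(c)\) implies \(c\in\B(c')\). Running the same argument with the roles of \(c\) and \(c'\) exchanged uses the ordered pair \((x,y)=(b,a)\) and the prefix \(\rho=\prefx_d(t)\), and yields a contradiction with \textup{(DC2)} for that pair. (Step 2 likewise needs no symmetry, because \textup{(DC1)} covers both orders.)

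This exhausts all cases, so \(C_n(P)\) is detecting, completing the induction.
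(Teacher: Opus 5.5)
Your proof is correct and follows essentially the same route as the paper's. Both use a well-founded induction: the paper takes a counterexample with the fewest blocks, and you induct on \(n\). Both apply Lemma~\ref{lem:window} at lengths \(\ell\), \(\ell_a\), and \(\ell_b\) to contradict \textup{(DC1)}, \textup{(S1)}, and \textup{(S2)}. Both use the symmetry \(c'\in\B(c)\Leftrightarrow c\in\B(c')\) to handle the case \(\lvert a\rvert>\lvert b\rvert\); this symmetry holds over \(\Sigma_q\), although the Lemma~\ref{lem:tilings} you cite for it is stated only for binary words.

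The only cosmetic difference is the case \(a=b\). You cancel the common block one symbol at a time from the left, whereas the paper argues directly that an effective swap at boundary \(\ell\) would force \(c_{\ell+1}=c_\ell\). Both arguments rest on discarding ineffective swaps.
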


\ifshowproofs
\begin{proof}
Suppose, for a contradiction, that the conclusion fails.  Then there exist some
length \(n\) and distinct words \(c,c'\in C_n(P)\) such that
\[
    c'\in\B(c).
\]
Among all such counterexamples, choose one whose chosen decompositions
\[
    c=as,
    \qquad
    c'=bt,
\]
with first blocks \(a,b\in P\), have the minimum total number of blocks.  Since
\(\LPC(1)\) is symmetric, we may assume
\[
    \lvert a\rvert\le\lvert b\rvert.
\]
Let \(S\) be a set of disjoint adjacent transpositions taking \(c\) to \(c'\),
with all non-effective swaps removed.

First suppose that
\[
    \lvert a\rvert=\lvert b\rvert=\ell.
\]
By Lemma~\ref{lem:window} applied with this value of \(\ell\), we have
\[
    \prefx_{\ell-1}(b)\in\Tball(a).
\]
If \(a\ne b\), this contradicts \textup{(DC1)}.  Hence \(a=b\).

We next show that no swap crosses the boundary after this common first block.  If
the suffixes were empty, then \(c=c'\), contrary to the choice of a
counterexample.  Thus this boundary exists.  Suppose that the boundary \(\ell\)
belonged to \(S\).  Then the adjacent boundary \(\ell-1\) does not belong to
\(S\), and the swap at boundary \(\ell\) gives
\[
    c'_\ell=c_{\ell+1}.
\]
On the other hand, since \(a=b\), we also have
\[
    c'_\ell=b_\ell=a_\ell=c_\ell.
\]
Thus \(c_{\ell+1}=c_\ell\), so the swap at boundary \(\ell\) is ineffective,
contradicting our choice of \(S\).  Therefore no swap crosses this boundary.
It follows that
\[
    t\in\B(s).
\]
Since \(s\ne t\), this gives a smaller counterexample, contradicting the
minimality choice.

Now suppose that
\[
    \ell_a=\lvert a\rvert
    <
    \ell_b=\lvert b\rvert .
\]
Put
\[
    d=\ell_b-\ell_a,
    \qquad
    b_0=\prefx_{\ell_a}(b).
\]
Since \(c\) and \(c'\) have the same total length, the suffix \(s\) has length at
least \(d\).  Let
\[
    \rho=\prefx_d(s)\in\Pref_d(P^*).
\]
Then the first \(\ell_b\) symbols of \(c\) are \(a\rho\).

If \textup{(S1)} holds for \((a,b)\), Lemma~\ref{lem:window} with
\(\ell=\ell_a\) gives
\[
    \prefx_{\ell_a-1}(b_0)\in\Tball(a),
\]
a contradiction.  Therefore the only possible way for \((a,b)\) to pass the
two-stage detecting test is through \textup{(S2)}.  But then
Lemma~\ref{lem:window} with \(\ell=\ell_b\) gives
\[
    \prefx_{\ell_b-1}(b)\in\Tball(a\rho),
\]
contradicting the same-length detecting test required by \textup{(S2)} for this
legal prefix \(\rho\).  Thus no counterexample exists.
\end{proof}
\fi

\begin{remark}[Prefix-freeness and rate]
\label{rem:det-prefixfree}
Unlike the correcting criterion, the detecting criterion does not force
prefix-freeness.  A block may be a prefix of another block and still pass the
directed detecting tests.  The detecting property of the finite-length sets
\(C_n(P)\) is meaningful regardless of prefix-freeness.  However, the block-rate
formula \(R(P)=\log_2\lambda\) is the actual word rate only when the block
concatenations are counted without overcounting distinct decompositions.  Thus,
when using the detecting criterion for rate statements, one should also verify
prefix-freeness, as we do for \(P_{\mathrm{det}}\), or otherwise verify that the
profile rate equals the actual growth rate of the distinct words in \(C_n(P)\).
\end{remark}

\section{Open problems}
\label{sec:open}

We conclude with several directions that remain open.  The results of this paper
show that finite block-concatenation codes, certified either by the two-stage
criterion or by the exact verifier, already reach rates close to the known upper
bound \(2/3\).  However, the remaining gap is still substantial from a structural
point of view.

\begin{enumerate}[leftmargin=*]
    \item The best certified construction has rate \(0.653618\), while the known
    upper bound is \(2/3\).  It remains open to determine the exact value of the
    binary zero-error capacity \(C_0\), and in particular to decide whether the
    current upper bound \(2/3\) is tight or can be improved.

    \item The two-stage criterion is sufficient but not necessary, whereas the
    product-automaton verifier decides exact confusability for a fixed prefix-free
    block set.  It would be useful to exhibit an explicit binary block set \(P\)
    such that all finite-length codes \(C_n(P)\) are correcting, as verified by the
    exact automaton, but \(P\) does not pass the two-stage criterion.

    \item The recursive guarded-block searches carried out so far appear to plateau
    near rate \(0.655\).  This is not a proven upper bound, and by
    Theorem~\ref{thm:complete}, it is not a limitation of the local criterion
    itself.  Rather, it seems to be a limitation of the particular recursive
    families and search heuristics used so far.  It remains open to find
    different finite block sets, constrained concatenation systems, or
    finite-state constructions that break this apparent plateau.

    \item The constructions in this paper are based on free concatenation of a
    finite block set \(P\).  More general constrained systems, graph-directed
    block codes, or finite-state encoders may allow higher rates by permitting
    many safe adjacencies while forbidding only the dangerous ones.

    \item The verifier provides an exact finite-state certificate, and in
    principle it can also be used to recover the unique codeword \(c\in C_n(P)\)
    satisfying \(z\in\B(c)\), when such a codeword exists.  However, this does not
    yet give a simple systematic encoder or a clean low-complexity decoding
    algorithm for the high-rate constructions.

    \item For detection, Section~\ref{sec:detect} gives a construction of rate
    \(0.756707\), while the pairing upper bound gives
    \(R_{\mathrm{det},2}\le\frac12\log_2 3\approx0.792481\).  The true detecting
    capacity is unknown.

    \item The criterion is \(q\)-ary and naturally extends the radius-one
    definitions to larger alphabets.  It remains open to determine whether
    analogous high-rate block constructions exist for \(q>2\), and to what extent
    the two-stage criterion can be adapted to \(\LPC(r)\) for \(r>1\).

    \item The computational searches suggest a connection between some recursive
    block profiles and coefficient sequences arising from quadrinomial-type
    generating functions.  At present this connection is not part of the proof of
    correctness or of the rate analysis.  A rigorous explanation could reveal
    hidden structure in the search space.
\end{enumerate}

\section{Conclusion}
\label{sec:conclusion}

We have studied binary block-concatenation codes for the radius-one limited
permutation channel.  The main difficulty is that a local adjacent transposition
may interact with a block boundary, so checking the validity of a block set cannot
be reduced to ordinary pairwise disjointness of the block balls alone.

To address this, we developed two complementary certification tools.  The first is
a \(q\)-ary two-stage local criterion.  It gives a short, human-checkable
certificate: same-length block pairs are tested through truncated balls, while
unequal-length pairs are resolved by a finite second-stage prefix check.  Once a
finite block set \(P\) passes the criterion, Theorem~\ref{thm:criterion}
guarantees that \(C_n(P)\) is correcting for \(\LPC(1)\) for every length \(n\).
The second tool is an exact product-automaton verifier.  Unlike the criterion,
the verifier tests ball-disjointness itself and decides whether the whole family
\(\{C_n(P)\}_{n\ge0}\) is correcting, with no length cutoff.

Using the local criterion, and independently confirming the results with the exact
verifier, we constructed several explicit binary block-concatenation families.
Starting from the guarded recursive baseline of rate approximately \(0.642805\),
we obtained larger certified block sets of rates \(0.649872\), \(0.652018\), and
finally \(0.653618\).  The last construction achieves about \(98.04\%\) of the
known upper bound \(2/3\), leaving an absolute gap of approximately \(0.013049\).
Thus these constructions improve the best previous string-concatenation rate while
remaining close to the best known upper bound \(2/3\).

We also clarified the scope of the criterion.  Although it is only sufficient and
not exact, it is not asymptotically rate-limiting: for every alphabet size \(q\),
the supremum of \(q\)-ary block rates certified by the criterion is exactly the
zero-error capacity \(C_0^{(q)}\).  Finally, we considered the related problem of
error detection and gave a binary detecting block construction of rate
\(0.756707\), while a simple \(q\)-ary pairing argument gives the binary upper
bound \(\frac12\log_2 3\approx0.792481\).

Several directions remain open.  The most important is to determine the exact
binary zero-error capacity for \(\LPC(1)\), and to decide whether the current
upper bound \(2/3\) is tight.  On the constructive side, it would be natural to
go beyond free block concatenation, for example to graph-constrained or
finite-state constructions that forbid only dangerous block adjacencies.  On the
algorithmic side, efficient systematic encoding and decoding for the high-rate
constructions remain to be developed.

\begin{appendices}

\section{Explicit block set for the best construction \texorpdfstring{\((R=0.653618)\)}{(R=0.653618)}}
\label{app:blocks}

We list the block set \(P^\star\) used in Section~\ref{sec:construction}.  Since
\(P^\star\) is closed under bitwise complement, it is enough to list one
representative from each complementary pair.  For a binary word
\(w=w_1\cdots w_n\), let \(\overline w=(1-w_1)\cdots(1-w_n)\).  Let
\(H^\star\) be the set listed in Table~\ref{tab:pstar-reps}.  The full block set is
\[
    P^\star=H^\star\cup\overline{H^\star}.
\]
Thus \(\lvert P^\star\rvert=2\lvert H^\star\rvert=574\), with length profile
\[
\begin{aligned}
    &(p_{10},p_{11},p_{12},p_{13},p_{14},p_{15},p_{16},p_{17},p_{18},p_{19})\\
    &\qquad=(14,6,78,58,56,92,60,72,58,80).
\end{aligned}
\]
The growth parameter \(\lambda^\star\) is the unique positive solution of
\[
\begin{aligned}
    &14(\lambda^\star)^{-10}+6(\lambda^\star)^{-11}+78(\lambda^\star)^{-12}
    +58(\lambda^\star)^{-13}+56(\lambda^\star)^{-14}\\
    &\quad +92(\lambda^\star)^{-15}+60(\lambda^\star)^{-16}
    +72(\lambda^\star)^{-17}+58(\lambda^\star)^{-18}+80(\lambda^\star)^{-19}=1.
\end{aligned}
\]
Numerically,
\[
    \lambda^\star\approx1.573108387,
    \qquad
    R(P^\star)=\log_2\lambda^\star\approx0.653618076.
\]

\begingroup
\scriptsize
\setlength{\tabcolsep}{4pt}
\renewcommand{\arraystretch}{0.92}

\begin{longtable}{@{}c>{\ttfamily}l>{\ttfamily}l>{\ttfamily}l@{}}
\caption{Representatives \(H^\star\) for \(P^\star=H^\star\cup\overline{H^\star}\).}
\label{tab:pstar-reps}\\
\toprule
Length & \multicolumn{3}{c}{Representatives} \\
\midrule
\endfirsthead

\toprule
Length & \multicolumn{3}{c}{Representatives} \\
\midrule
\endhead

\bottomrule
\endfoot

10 & 0000111111 & 0001000000 & 0001111000 \\
10 & 0111000000 & 0111110000 & 0111111000 \\
10 & 0111111111 & & \\
\midrule
11 & 00001111000 & 00010000111 & 00011101111 \\
\midrule
12 & 000000000000 & 000000000111 & 000000001111 \\
12 & 000000110000 & 000000111000 & 000000111111 \\
12 & 000001110000 & 000001111000 & 000001111111 \\
12 & 000011101111 & 000100010000 & 000110000000 \\
12 & 000110000111 & 000110001111 & 000111000000 \\
12 & 000111000111 & 000111110000 & 000111111000 \\
12 & 000111111111 & 001100011111 & 001110000000 \\
12 & 001110000111 & 001110001111 & 001111000000 \\
12 & 001111000111 & 001111001111 & 001111110000 \\
12 & 001111111000 & 001111111111 & 011100001111 \\
12 & 011100010000 & 011101110000 & 011101111000 \\
12 & 011101111111 & 011110000000 & 011110000111 \\
12 & 011110001111 & 011110011111 & 011111110000 \\
\midrule
13 & 0000000000111 & 0000000111000 & 0000000111111 \\
13 & 0000001000000 & 0000001111000 & 0000011111000 \\
13 & 0000110000111 & 0000111000000 & 0000111000111 \\
13 & 0000111110000 & 0001000001111 & 0001000111111 \\
13 & 0001110000111 & 0001111001111 & 0001111111000 \\
13 & 0011000000111 & 0011100000111 & 0011110000111 \\
13 & 0011110111000 & 0011110111111 & 0011111000000 \\
13 & 0011111000111 & 0011111111000 & 0111000001111 \\
13 & 0111000111111 & 0111100000111 & 0111110001111 \\
13 & 0111111001111 & 0111111110000 & \\
\midrule
14 & 00000001111000 & 00000010000111 & 00001110000111 \\
14 & 00001111001111 & 00010000110000 & 00010001111000 \\
14 & 00011000010000 & 00011100010000 & 00011101110000 \\
14 & 00110000110000 & 00110001110000 & 00110001111000 \\
14 & 00111001110000 & 00111100010000 & 00111101111000 \\
14 & 00111110000111 & 01110000110000 & 01110000111000 \\
14 & 01110001111000 & 01110111101111 & 01111000010000 \\
14 & 01111000110000 & 01111000111000 & 01111001110000 \\
14 & 01111001111000 & 01111101110000 & 01111110111000 \\
14 & 01111110111111 & & \\
\midrule
15 & 000000000001111 & 000000000110000 & 000000001110000 \\
15 & 000000011101111 & 000000011111000 & 000000100000111 \\
15 & 000000100010000 & 000000110001111 & 000000111001111 \\
15 & 000000111110000 & 000001110001111 & 000001111001111 \\
15 & 000001111110000 & 000011100010000 & 000011101110000 \\
15 & 000100000111000 & 000100010001111 & 000100011101111 \\
15 & 000100011111000 & 000110000001111 & 000110001110000 \\
15 & 000111000001111 & 000111110001111 & 000111111001111 \\
15 & 000111111110000 & 001110000001111 & 001110000110000 \\
15 & 001110001110000 & 001111000001111 & 001111001110000 \\
15 & 001111011101111 & 001111011111000 & 001111100010000 \\
15 & 001111110001111 & 001111111001111 & 001111111110000 \\
15 & 011100000111000 & 011100010001111 & 011100011101111 \\
15 & 011100011111000 & 011101110001111 & 011101111110000 \\
15 & 011110000001111 & 011111000111000 & 011111101111000 \\
15 & 011111110001111 & & \\
\midrule
16 & 0000000001000000 & 0000000001000111 & 0000001000111000 \\
16 & 0000001000111111 & 0000001110111000 & 0000001110111111 \\
16 & 0000011110111000 & 0000011110111111 & 0000110000110000 \\
16 & 0000111000001111 & 0000111110001111 & 0001000111000000 \\
16 & 0001000111000111 & 0001110000110000 & 0001111001110000 \\
16 & 0001111110111000 & 0001111110111111 & 0011100001000000 \\
16 & 0011100001000111 & 0011110000110000 & 0011111000001111 \\
16 & 0011111110111000 & 0011111110111111 & 0111000111000000 \\
16 & 0111000111000111 & 0111011111001111 & 0111100000110000 \\
16 & 0111111001110000 & 0111111011101111 & 0111111110001111 \\
\midrule
17 & 00000000010000111 & 00000000110000111 & 00000001111001111 \\
17 & 00000010000110000 & 00000010001111000 & 00000011001111000 \\
17 & 00000011101111000 & 00000110000111000 & 00000111001111000 \\
17 & 00000111101111000 & 00001110000110000 & 00001111001110000 \\
17 & 00010000110001111 & 00010001110000111 & 00010001111001111 \\
17 & 00011000010001111 & 00011100010001111 & 00011101110001111 \\
17 & 00011111101111000 & 00110000111000111 & 00111000010000111 \\
17 & 00111000110000111 & 00111001111000111 & 00111100010001111 \\
17 & 00111101111001111 & 00111110000110000 & 00111111001111000 \\
17 & 00111111101111000 & 01110000110001111 & 01110001110000111 \\
17 & 01110001111001111 & 01110111101110000 & 01111000010001111 \\
17 & 01111001110001111 & 01111101110001111 & 01111110111110000 \\
\midrule
18 & 000000000001110000 & 000000000100010000 & 000000001110001111 \\
18 & 000000100011101111 & 000000111011101111 & 000000111110001111 \\
18 & 000001110001110000 & 000001111011101111 & 000001111110001111 \\
18 & 000011100010001111 & 000011101110001111 & 000100011100010000 \\
18 & 000110000001110000 & 000111000001110000 & 000111110001110000 \\
18 & 000111111011101111 & 000111111110001111 & 001110000001110000 \\
18 & 001110000100010000 & 001110001110001111 & 001111000001110000 \\
18 & 001111100010001111 & 001111111011101111 & 001111111110001111 \\
18 & 011100011100010000 & 011101111110001111 & 011110000001110000 \\
18 & 011111101111001111 & 011111110001110000 & \\
\midrule
19 & 0000000001000001111 & 0000000001000011000 & 0000001000111100111 \\
19 & 0000001000111110000 & 0000001110111100111 & 0000001110111110000 \\
19 & 0000011110111100111 & 0000011110111110000 & 0000110000110001111 \\
19 & 0000111000001110000 & 0000111000011000111 & 0000111100111000111 \\
19 & 0000111110001110000 & 0001000111000001111 & 0001000111000011000 \\
19 & 0001100001000111000 & 0001110000110001111 & 0001110001000111000 \\
19 & 0001110111000111000 & 0001111001110001111 & 0001111110111100111 \\
19 & 0001111110111110000 & 0011100001000001111 & 0011100001000011000 \\
19 & 0011110000110001111 & 0011110001000111000 & 0011111000001110000 \\
19 & 0011111000011000111 & 0011111110111100111 & 0011111110111110000 \\
19 & 0111000111000001111 & 0111000111000011000 & 0111011110111000111 \\
19 & 0111011111001110000 & 0111100000110001111 & 0111100001000111000 \\
19 & 0111110111000111000 & 0111111001110001111 & 0111111011101110000 \\
19 & 0111111011111000111 & & \\

\end{longtable}
\endgroup

\end{appendices}


\end{document}